\documentclass[11pt,a4paper]{article}

\usepackage[margin=1.1in]{geometry} 
\usepackage[T1]{fontenc}
\usepackage[utf8]{inputenc}
\usepackage[UKenglish]{babel}
\usepackage{charter} 
\usepackage{microtype}
\usepackage{amsmath, amsthm, amssymb, amsfonts, mathtools}
\usepackage[style=numeric, backend=bibtex, sorting=none]{biblatex}
\usepackage{hyperref}
\usepackage[capitalize,noabbrev]{cleveref}
\usepackage{authblk} 
\usepackage{float}
\usepackage{array}
\usepackage{caption}
\usepackage[disable,colorinlistoftodos,textsize=footnotesize]{todonotes}

\newtheorem{theorem}{Theorem}[section]
\newtheorem{lemma}[theorem]{Lemma}
\newtheorem{proposition}[theorem]{Proposition}
\newtheorem{corollary}[theorem]{Corollary}
\theoremstyle{definition}
\newtheorem{definition}[theorem]{Definition}

\newtheorem{remark}[theorem]{Remark}
\newtheorem{claim}[theorem]{Claim}
\newenvironment{claimproof}[1][Proof of Claim]{%
  \begin{proof}[#1]%
}{%
  \end{proof}%
}

\newcommand{\domain}{\mathcal{U}}
\newcommand{\bintree}{\mathcal{T}}
\newcommand{\subtree}[1]{\bintree_{#1}}
\newcommand{\rootnode}{\text{root}}
\newcommand{\arr}{\text{Ar}}
\newcommand{\subseterr}{\subtree{err}}
\newcommand{\subseterrprime}{\widehat{\subseterr}}
\newcommand{\ok}{\operatorname{\mathsf{ok}}}
\newcommand{\error}{\operatorname{\mathsf{error}}}

\newcommand{\bisim}{\mathrel{\ooalign{$\leftrightarrow$\cr\hidewidth\raisebox{-.4ex}{\rule{0.8em}{.4pt}}\hidewidth}}}
\newcommand{\bd}{\mathsf{bd}}

\newcommand{\kvwin}{\mathsf{AWin}_\exists} 
\newcommand{\ekvwin}{\mathsf{AWin}_{\exists}^{\epsilon}} 
 
\newcommand{\bdwin}{\mathsf{Win}_\exists^ \mathsf{bd}} 

\newcommand{\power}{\mathcal P} 
\newcommand{\cat}[1]{\mathbf{#1}}

\newcommand{\set}{\cat {Set}} 

\newcommand{\model}{\mathcal{M}}
\newcommand{\logicalprop}{\varphi}
\newcommand{\forces}{\vDash}
\newcommand{\fun}{\mathsf{f}}
\newcommand{\V}{\mathcal V}
\newcommand{\evfun}{\mathsf{ev}_{\fun}}
\newcommand{\proj}[1]{\mathrm{proj}_{#1}}

\title{\textbf{Tree Automata Acceptance up to Measurable Defect}}

\author[1]{Anita Moyasari}
\author[1]{Harsh Beohar}
\author[1]{Charles Grellois}
\author[2]{Clemens Kupke}

\affil[1]{School of Computer Science, University of Sheffield, United Kingdom}
\affil[2]{University of Strathclyde, Glasgow, United Kingdom}

\date{} 

\begin{document}

\maketitle

\begin{abstract}
Automata acceptance can, in several situations of interest, be captured game-theoretically via \emph{acceptance games}.
The existence of a winning strategy for Verifier then captures the existence of a winning run-tree of a given automaton over a model.
However, such acceptance is rigid, in that it does not allow a measurable defect budget, which can be a challenge in software verification.
In this paper, we draw inspiration from how bisimulation distance can be defined as an extension of bisimilarity to define $\epsilon$-acceptance games. 
Our main theorem shows that a tree $\bintree$ is $\varepsilon$-accepted iff there is a tree $\bintree'$ that is accepted in the traditional (rigid) sense and the bisimulation distance of $\bintree'$ and $\bintree$ is at most $\varepsilon$.
Our work also suggests a strong connection with measure theory, of which we give a preliminary exploration via appropriate examples.
Our framework is defined over binary trees with leaves and infinite branches, and strictly contains the case in which binary nodes are seen as probabilistic choice and the defect measures the probability of the set of rejected branches.

\vspace{1.5em}
\noindent \textbf{2012 ACM Subject Classification:} Theory of computation $\to$ Automata extensions; Theory of computation $\to$ Verification by model checking

\noindent \textbf{Keywords:} automata theory, acceptance games, bisimulation distance
\end{abstract}

\section{Introduction}

Model checking is a successful approach to verification, in which a program's behaviour is modelled by a mathematical object $\model$ and the specification of the program is expressed as a logical property $\logicalprop$. In many situations of interest (such as modal $\mu$-calculus verification over trees, see for instance~\cite{JaninWlukiewicz:Automata}), the model-checking problem $\model \forces \logicalprop$ can be equivalently formulated by a two-player game played over the states of an automaton $\mathcal A_\logicalprop$ and the model $\model$, where Verifier is trying to assert that the property $\logicalprop$ is valid over $\model$, while Falsifier is dually trying to establish that $\logicalprop$ is invalid over $\model$.

Such games traditionally model a Boolean, \emph{rigid} notion of acceptance. In this paper, we build on an insight from process algebra \cite{Milner1989}, in which the notion of \emph{bisimulation} \cite{Park:Bisimulation1981} is considered a key equivalence on process terms.
It is also Boolean in nature since two pointed labelled transition systems are either bisimilar, or they are not; but this notion has been lifted to the notion of a \emph{bisimulation distance} which is well studied in quantitative settings like probabilistic generalisations of labelled transition systems \cite{10.1007/3-540-48320-9_19,10.1007/3-540-48224-5_35}.

In this paper, we extend the traditional, rigid game-theoretic automata acceptance games to a quantitative framework in which a defect budget is allowed. This defect budget provides an upper bound over some measure of the obstruction to acceptance -- that we make precise later in the paper. This turns the game into a quantitative one, in which Verifier plays moves that estimate upper bounds of the defect that would be faced later in a play, subject to thresholds coming from a global constraint. To ease reading, we take advantage of the well-known left-child, right-sibling encoding (cf.\thinspace \cite{TATA:2008}) and develop our approach on binary trees with leaves and infinite branches. The approach relies on a specification of how the defect budget is split in between both successors of an inner node; this notably covers cases in which the defect computes the probabilistic measure of the set of accepting branches, but the approach is more general as it is not restricted to linear combinations of errors.

Interestingly enough, our approach builds much more than a mere quantitative extension of automata acceptance games via the introduction of defects. Our main theorem (cf.\thinspace \cref{thm:main}) proves, under the restriction that the automaton is deadlock free, that when the defect is split between the successors of an inner (thus binary) node in a way that, in the coalgebraic world, is called a \emph{distance lifting}, then:
\[
\text{acceptance of a tree } \mathcal{T} \text{ with defect at most } \varepsilon
\]
is equivalent to the
\[
\text{existence of } \mathcal{T}'\text{, accepted in the traditional sense and at distance at most } \varepsilon \text{ of } \mathcal{T}
\]
where this distance -- \emph{bisimulation distance} -- can itself be captured by game-theoretic means.

We also explore the natural connection with measure theory via targeted examples, leaving the systematic treatment of the connection between distance liftings for certain classes of distance functions and measure theory for future work.

Overall, our work introduces a quantitative approach to verification, in which defect is admissible within certain thresholds, with key notions from the coalgebraic community. We present our results in a way that will make them amenable to a future coalgebraic treatment, covering in that way more data structures than trees, yet in a similar manner.

\paragraph*{Related Works}

\textbf{Quantitative Languages.} In~\cite{chatterjeeDH10}, Chatterjee, Doyen and Henzinger introduce quantitative languages of words. Such languages replace the Boolean value of whether a word $w$ belongs to $L$ by a real number; weighted automata are then used, and their runs output quantitative information. By contrast, our work does not amend trees or automata: by allowing a defect budget in the game, we capture the measure of an obstruction to acceptance; our main theorem that this defect coincides with bisimulation distance.

\smallskip
\noindent
\textbf{Probabilistic Model Checking.} In probabilistic model-checking (see for instance~\cite{Vardi85,Baier2008}), a core question is to compute the probability with which a probabilistic model such as a Markov chain satisfies a given logical specification. Our approach enables the verification of Markovian processes -- unfolded to a tree form, and for which the distance lifting would be chosen so as to encode the probability distributions. But our approach can also accommodate measure defects that do not come from (the encoding of) probabilistic systems, and is actually structurally rooted in the measure of a  defect to traditional acceptance rather than in the extension of automata and models with stochastic information.

\smallskip
\noindent
\textbf{Behavioural Metrics.} Bisimilarity was extended to bisimulation distance in the context of Markov processes, see~\cite{10.1007/3-540-48320-9_19,desharnaisGJP04,10.1007/3-540-48224-5_35},
and more recently to coalgebras \cite{baldanBKK18}. In the theory of coalgebras, there are two prevalent ways to define distance liftings: one generalises the Kantorovich lifting of distributions known as the codensity lifting \cite{codensity} of an endofunctor; the other \cite{Bonchi_et_al:UptoFibration} generalises the Wasserstein lifting of distributions known as the coupling-based lifting of an endofunctor in \cite{humeau_et_al:LIPIcs.CSL.2025.29}. In general, the two approaches to define lifting are not equivalent; however, for polynomial endofunctors (like the functor $F=\Sigma_0 \uplus \Sigma_2 \times \_ \times\_$ whose coalgebras are labelled binary trees) over sets it is known to be the same \cite{humeau_et_al:LIPIcs.CSL.2025.29}.

Our paper is part of our wider program on $\epsilon$-acceptance for data structures presented as coalgebras, extending coalgebraic automata theory~\cite{kupke-venema:coalg-aut-theory} and generalising the associated game semantics to a more quantitative setting. In the current paper we limit ourselves to binary trees and do not prove our main theorem (\cref{thm:main}) in its full generality (i.e. at the level of coalgebras), which is left for future.

\paragraph*{Organization}
We begin in \Cref{sec:prelim} by introducing the necessary preliminaries regarding trees, tree automata, and acceptance games.
In \Cref{sec:quantitative}, we generalize these definitions to a quantitative framework and introduce the bisimulation distance game.
Within this setting, we extend the classical notion of an acceptance game to that of $\epsilon$-acceptance, incorporating within the traditional acceptance game information about a tolerable defect budget. \Cref{sec:main thm} then establishes the formal connection between the quantitative and Boolean versions of these games. Finally, in \Cref{sec:optimal epsilon}, we discuss two applications, specifically focusing on the relationship between these games and measures on binary trees.

\section{Preliminaries}
\label{sec:prelim}
    The objective of this section is to recall the notion of acceptance games in the context of labelled binary trees from \cite{kupke-venema:coalg-aut-theory}. As mentioned in the introduction, and for the sake of readability, we restrict our attention to binary trees with leaves and infinite branches, taking advantage of the left-child, right-sibling encoding (cf.\thinspace \cite{TATA:2008}). Note that all the results of this paper can however be restated for ranked trees over a finite signature containing labels of finite arities other than 0 or 2.

\subsection{Labelled Binary Trees}
\label{subsec:trees}

\begin{definition}
A \emph{(binary) tree} is a subset $\domain$ of $2^*$ that is
\begin{enumerate}
\item prefix-closed: $\forall w,w'\in 2^*\quad (w\in \domain \land w' \leq w) \implies w' \in \domain$, and
\item sibling-closed: $\forall w \in \domain\quad w.0 \in \domain \iff w.1 \in\domain$.
\end{enumerate}
A \emph{labelled tree}, denoted $\bintree = (\domain,\ell)$, consists of a tree $\domain$, a set of labels $\Sigma = \Sigma_0 \uplus \Sigma_2$, and a labelling function $\ell \colon \domain \to \Sigma$ satisfying $\forall w \in \domain\quad \ell(w) \in \Sigma_2 \iff w.\{0,1\} \in \domain$. 
\end{definition}

\begin{definition}
For a tree  $\bintree = (\domain , \ell)$ let $\subtree{w} = (\domain_w , \ell_w)$ be a \emph{subtree} of $\bintree$ rooted at $w$ defined as
$\domain_w\,=\,w^{-1}\mathcal{U}=\{v | wv \in \mathcal{U}\}$
and $\ell_w(v) = \ell(wv)$.
\end{definition}

The key idea of our approach is that the behaviour of a tree can be studied using local information at each branching point, rather than requiring global information at once. This is closer in spirit to the coalgebraic presentation of trees (see, for instance, \cite{Jacobs_2016}). One advantage of coalgebraic modelling is that it allows a unified treatment of defining conformances (like bisimilarity or bisimulation distance); for instance, we will use one of these methods (cf.\thinspace \cref{rem:coupling-basedlifting}) to define bisimulation distance over trees in \cref{sec:quantitative}. This naturally leads to the introduction of a successor function.

We first observe that the successor of a node is either in $\Sigma_0$, in the case of a leaf, or in $\Sigma_2 \times \domain \times \domain$ in the case of a branching node. To formalise this distinction, we introduce 
$F\domain= \Sigma_0 \uplus (\Sigma_2 \times \domain \times \domain)$ specifying the possible types of successors.

\begin{definition}\label{def:bintree}
Let $\bintree = (\domain , \ell)$ be a labelled tree and let $\arr :\domain \to \{0,2\}$ be the \emph{arity} function denoting the arity of a node in $\domain$ (i.e., $\arr (w) =0$ if $ \ell(w) \in \Sigma_0$; $\arr (w)=2$ otherwise). For a labelled tree $\bintree = (\domain , \ell)$, its \emph{successor function} is a map $\gamma: \domain \rightarrow F\domain$ defined as follows:
\[
\gamma (w)=
\begin{cases}
\ell(w)& \text{if}\ \arr(w)=0\\
\bigl(\ell(w),w.0 ,w.1\bigr)& \text{if}\   \arr(w)=2
\end{cases}
\]
\end{definition}

Henceforth we write $\bintree$ (and similarly $\bintree'$) to denote a labelled binary tree $(\domain, \ell)$ equipped with a successor function $\gamma$ (respectively $(\domain', \ell')$ with $\gamma'$).

\begin{remark}
Although we present our definitions for trees, all notions introduced in this paper extend naturally to any similar structure consisting of a carrier set $X$ together with a successor function $\gamma \colon X \to FX$. In several places, we will apply definitions given for trees to other structures of the same type, such as automata with carrier set $A$ and transition function $\delta \colon A \to FA$, without restating them explicitly. This slight abuse of notation should cause no confusion, as the arguments remain unchanged.

Readers familiar with coalgebraic terminology will recognise that all these structures are instances of $F$-coalgebras for a polynomial endofunctor $F$ over the category of sets. In particular, our labelled binary trees are coalgebras for the functor $F\_ = \Sigma_0 \uplus (\Sigma_2 \times \_\times \_)$.
\end{remark}

\noindent
We end this subsection with the following well-known result on substituting $\bintree'$ inside $\bintree$ at node $w\in\domain$.
\begin{proposition}
\label{prop:tree substitution}
Let $\bintree $ and $\bintree' $ be binary trees. For any node $w \in \domain$, by substituting the subtree $\subtree{w}$ with $\bintree'$ we get a new tree defined as $\bintree[w / \bintree'] = (\overline{\domain}, \overline{\ell} , \overline{\gamma})$ where 
$\overline{\domain}=
\bigl(\domain \setminus \{ v \in \domain \mid w \le v \}\bigr)
\;\cup\;
\{\, w .v \mid v \in \domain' \,\},
$
and the successor function $\overline{\gamma}$ is given by
\[
\overline{\gamma}(w)
=
\begin{cases}
\gamma(u) & \text{if } u \in \domain \setminus \{ v \mid w \le v \},\\
\sigma & \text{if } u = w.v\in \{\, w . v \mid v \in \domain' \,\}, \, \gamma'(v) = \sigma \in \Sigma_0 \\
(\sigma , u.0 , u.1)& \text{if } u = w.v\in \{\, w . v \mid v \in \domain' \,\}, \, \gamma'(v) =  (\sigma , v.0 , v.1) \text{ for }\sigma \in \Sigma_2
\end{cases}
\]
\end{proposition}
\subsection {Automata Acceptance Game}
\label {subsec: acceptance game}

When viewing programs as trees, a common way of analyzing if they satisfy a property is a two-player game characterising some form of  bisimulation. In order to define this game, we first need to ``lift'' a relation between nodes of two tree structures to a relation between their successors, as described in the following definition. We then introduce the acceptance game.

\begin{definition}\label{def:rlift}
Let $\bintree,\,\bintree'$ be two trees over the domains $\domain,\,\domain'$, respectively. For a relation $R \subseteq \domain \times \domain'$, we define the \emph{lifted relation} $\bar{R} \subseteq F\domain \times F\domain'$ as follows:
$$
\alpha \mathrel{\bar{R}} \beta \iff
\alpha = \beta \in \Sigma_0
\ \text{or }\
\left\{
\begin{aligned}
& \alpha = (\sigma, w_1, w_2)\; , \;  \beta = (\sigma, w'_1, w'_2)\; \text{ for} \;  \sigma \in \Sigma_2, \\
& \text{and }\; w_1 \mathrel{R} w'_1\; ,\text{ and } \; w_2 \mathrel{R} w'_2.
\end{aligned}
\right.
$$
\end{definition}

\begin{definition}
A relation $R \subseteq  \domain \times \domain'$ is called a \emph{bisimulation} between
$\bintree$ and $\bintree'$ iff $R \subseteq (\gamma \times \gamma')^{-1}(\bar R)$. In other words, for every $w\in \domain,\,w'\in\domain'$ we have 
\[
w \mathrel R w'\implies \Big( \ell (w) = \ell' (w') \land \big( \arr(w) = 2 \implies w.0 \mathrel{R} w'.0 \land w.1 \mathrel{R} w'.1 \big) \Big).
\]

Two subtrees $\subtree{w} $ , $\subtree{w'}'$ are \emph{bisimilar}, denoted $\subtree{w} \bisim {\subtree{w'}'}$, iff there is a bisimulation relation $R$ that includes $(w , w')\in R$. In particular, $\bintree \bisim \bintree' $ iff there is a bisimulation relation $R$ that includes $(\rootnode_{\bintree} , \rootnode_{\bintree'})\in R$.
\end{definition}

\begin{definition}
A nondeterministic tree automaton $\mathbb{A}$ is a quadruple $(A, a_0, \Delta, Acc)$, where $A$ is a finite set of states, $a_0 \in A$ is the initial state,
$\Delta : A \to \power FA$ is the transition function,
and $Acc \subseteq A^\omega$ is the infinite acceptance condition.
\end{definition}

\begin{remark}
  A more general acceptance game appears in~\cite{kupke-venema:coalg-aut-theory}, for an alternating automata with $\Delta \colon A \rightarrow \power\power FA$. Our game can be seen as an instance of this general game, in which the choice of Falsifier is always unique, as we do not model alternation. However, as shown in ~\cite[Thm.~5.2]{kupke-venema:coalg-aut-theory}, for weak pullback preserving endofunctors $F$ over sets, alternating and non-deterministic automata have the same expressive power. The functor for binary trees satisfies this condition. Hence, in our case alternating $F$-automata can be simulated by non-deterministic ones over the same functor -- this is a coalgebraic generalisation of the well-known Rabin's theorem on the complementation of tree automata~\cite{rabin69:deci}.

\end{remark}

We will now introduce the automata acceptance game, not in its full generality but for the case where all infinite runs/plays are accepted and thus won by the Verifier. As the acceptance condition $Acc=A^\omega$  we represent automata by a triple.

\begin{definition}
\label {def:acceptance game}
Let $\mathbb{A}=(A, a_0, \Delta)$ be a tree automaton and let $\bintree$ be a labelled tree. The acceptance game associated with $\mathbb{A}$ and $\subtree{w}$ is a two player game defined as follows:
\\
Each round of a game starts with a basic position $(a,w)$ where $a\in A ,\, w \in \domain$, then
\begin{enumerate}
\item At position $(a,w)$ Verifier (often denoted by $\exists$ in the sequel) can choose the next state $\delta (a) \in F A$ of the automaton from the possible choices in $\Delta (a)$.
\item Then $\exists$ picks a relation $R\subseteq A\times \domain$ such that $(\delta(a) , \gamma (w)) \in \bar{R}$.
\item At position $R$, Falsifier (denoted $\forall$) can choose any pair $(b , v)$ that belongs to $R$.
\end{enumerate}
The game then advances to the next round with basic position $(b,v)$.
\\
If a player cannot play then they lose, and over an infinite play $\exists$ wins.
We say that $\mathbb{A}$ accepts $\subtree{w}$ when $(a_0 , w) $ is a winning position for $\exists$. The set of winning positions for Verifier is denoted as $\kvwin$.
\\
Henceforth we will present games in tabular form, providing all the relevant information.
\begin{table}[H]
\centering
\begin{tabular}{|c|c|c|}
\hline
\textbf{Position} & \textbf{Player} & \textbf{Admissible moves}  \\ \hline
$(a,w) \in A \times \domain$ & $\exists$ & $\{ (\delta (a) , w)\in FA \times  \domain  \mid \delta(a) \in \Delta(a)\}$ \\ \hline
$(\delta (a) , w )\in FA\times  \domain $ & $\exists$& $R \in \power (A \times \domain)  \mid (\delta (a) , \gamma(w)) \in \bar{R}\}$ \\ \hline
$R \in \power(A \times \domain)$ & $\forall$ & $\{(b,v)\mid (b,v)\in R \}$ \\ \hline
\end{tabular}
\caption{Acceptance game for a tree automaton and a binary tree
~\cite{kupke-venema:coalg-aut-theory}.}
\label{table:accept game}
\end{table}

\end{definition}

\begin{remark}
\label{rem:smallestrelation}
The above game is the general coalgebraic form of the acceptance game. This can be simplified to its standard form by observing that if $\exists$ has a winning move $R$, then its restriction to immediate successor pairs is also winning, since the validity of $R$ depends only on successors. Concretely, any such $R$ is either $\varnothing$ (when $\ell(a)=\ell(w)\in \Sigma_0$) or $\{(b_0,w.0),(b_1,w.1)\}$ (when $\ell(a)=\ell(w)\in \Sigma_2$, $\delta(a)=(\sigma,b_0,b_1)$, and $\gamma(w)=(\sigma,w.0,w.1)$). Consequently, Falsifier’s choices may be restricted to these successors without loss of generality. Formally, a position is winning in the original game iff it is winning in this restricted version.
\end{remark}

\section {Quantitative Games for Bisimulation Distance and \texorpdfstring{$\epsilon$}{epsilon}-acceptance}
\label{sec:quantitative}

\subsection {Distance Lifting}
\label{subsec:distance lifting}

In order to obtain quantitative versions of (bisimulation/acceptance) games and their associated results, we need a `general' notion of distance lifting (instead of relation lifting). A relation can be seen as a function taking values in $\{0,1\}$; this perspective naturally extends to functions to the unit interval $[0,1]$. In this setting, we call such functions distances and introduce a suitable notion of distance lifting.

\begin{definition}\label{def:dlift}
Let $\bintree$ and $\bintree'$ be two trees of domains $\domain$ and $\domain'$, respectively. Then every order preserving function $\fun \colon [0,1]\times[0,1]\rightarrow [0,1]$ gives rise to a distance lifting $\bar d\colon F\domain \times F\domain' \to [0,1]$ for a given $d\colon \domain \times \domain' \to [0,1]$ as follows:
\[
\bar{d}(\alpha, \beta)=
\begin{cases}
0 & \text{if } \alpha= \beta \in \Sigma_0 \\
\fun\bigl(d(w_1,w'_1) \, , \, d(w_2,w'_2)\bigr)& \text{if } \alpha = (\sigma ,w_1 ,w_2 ) \; , \; \beta = (\sigma ,w'_1 ,w'_2 ) \; \text{ for} \;  \sigma \in \Sigma_2 \\
1 &  \text{otherwise}
\end{cases}.
\]
\end{definition}

\begin{remark}\label{rem:coupling-basedlifting}
Perhaps, for category theory minded readers, it is useful to note that the above distance lifting is an instance of coupling based lifting \cite{Bonchi_et_al:UptoFibration} of a set endofunctor $F\colon \set \to \set$. The following construction is similar to \cite[Equation~5]{Bonchi_et_al:UptoFibration}, but our $\V$-valued relations (where $\V$ is the Lawvere quantale $([0,1],\geq)$) are over two different sets (unlike in op.\thinspace cit.\thinspace where they are restricted to binary relations over the same set).

Now the function $\fun$ gives rise to an evaluation function $\evfun\colon F[0,1] \to [0,1]$ as follows: $\sigma \in\Sigma_0 \mapsto 0$ (for each $\sigma\in\Sigma_0$) and $(\sigma,r,r')\mapsto \fun(r,r')$ (for each $(\sigma,r,r')\in \Sigma_2 \times [0,1] \times [0,1] $). Notice that the evaluation function $\evfun$ is also monotone whenever $\fun$ is, as soon as we enriched $F[0,1]$ by lifting the order $\geq$ on $[0,1]$ and letting $\Sigma_0,\Sigma_2$ to be discretely ordered. As a result, we have a following composition of monotone maps:
\begin{equation}\label{eq:abstractdlift}
([0,1]^{\domain \times \domain'},\geq) \xrightarrow{\evfun \circ F(\_)} ([0,1]^{F(\domain \times \domain')},\geq)  \xrightarrow{(\lambda_{U,U'})_!} ([0,1]^{F(\domain) \times F(\domain')},\geq),
\end{equation}
where $\lambda_{U,U'}$ is the tupling of projections $\langle F\proj \domain, F\proj \domain'\rangle \colon F(\domain \times \domain') \to F(\domain) \times F(\domain')$ and $f_!\colon [0,1]^X \to [0,1]^Y$ (for a function $f\colon X \to Y$) is the left adjoint to reindexing map $f^*$ (see \cite{Bonchi_et_al:UptoFibration} for more details) given by $f_!(p)(y) = \inf_{f(x)=y} p(x)$.
In this categorical setting, we obtain the following result (which is not required to understand the rest of this paper, but aimed at the reader with a coalgebraic appetite):
\begin{lemma}
\label{lem:DliftIsClift}
Let $d\colon \domain \times \domain' \to [0,1]$. Then, distance lifting of $d$ is equivalent to the application of the composition given in \eqref{eq:abstractdlift} on $d$; i.e., $\bar d = (\lambda_{\domain,\domain'})_! (\evfun \circ F(d))$.
\end{lemma}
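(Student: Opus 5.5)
The plan is a direct computation of the right-hand side $(\lambda_{\domain,\domain'})_! (\evfun \circ F(d))$ at an arbitrary pair $(\alpha,\beta) \in F\domain \times F\domain'$, followed by a comparison with the three clauses of \cref{def:dlift}. By definition of the left adjoint,
\[
(\lambda_{\domain,\domain'})_! (\evfun \circ F(d))(\alpha,\beta) \;=\; \inf\bigl\{\, \evfun(F(d)(\xi)) \;\bigm|\; \xi \in F(\domain\times\domain'),\ \lambda_{\domain,\domain'}(\xi) = (\alpha,\beta) \,\bigr\},
\]
with the convention $\inf\varnothing = 1$, the top of $[0,1]$ in the usual order (the bottom of $([0,1],\geq)$). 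So the whole argument reduces to describing the fibre $\lambda_{\domain,\domain'}^{-1}(\alpha,\beta)$ explicitly. The only structural input is that $F$ is polynomial, so $F(\domain\times\domain') = \Sigma_0 \uplus \Sigma_2 \times (\domain\times\domain')^2$.

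First I compute $\lambda$ on each summand. On a constant $\sigma\in\Sigma_0$, both $F\proj{\domain}$ and $F\proj{\domain'}$ act as the identity, so $\lambda(\sigma) = (\sigma,\sigma)$. On a branching element,
\[
\lambda\bigl(\sigma,(w_1,w_1'),(w_2,w_2')\bigr) = \bigl((\sigma,w_1,w_2),(\sigma,w_1',w_2')\bigr).
\]
This map is injective, and its image is exactly the set of pairs $(\alpha,\beta)$ that are either equal constants in $\Sigma_0$ or branching with the same label. This gives three cases.

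\textbf{Case 1:} $\alpha=\beta=\sigma\in\Sigma_0$. The fibre is $\{\sigma\}$, and $\evfun(F(d)(\sigma)) = \evfun(\sigma) = 0$. This matches the first clause.

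\textbf{Case 2:} $\alpha=(\sigma,w_1,w_2)$ and $\beta=(\sigma,w_1',w_2')$ with $\sigma \in \Sigma_2$. The fibre is the singleton $\{(\sigma,(w_1,w_1'),(w_2,w_2'))\}$. Applying $F(d)$ gives $(\sigma, d(w_1,w_1'), d(w_2,w_2'))$, and $\evfun$ sends this to $\fun(d(w_1,w_1'),d(w_2,w_2'))$. The infimum over a singleton is that value, which matches the second clause.

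\textbf{Case 3:} all other pairs, i.e.\ different constants, a constant paired with a branching element, or branching elements with different labels. The fibre is empty, so the value is $\inf\varnothing = 1$, matching the ``otherwise'' clause.

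Monotonicity of $\fun$ is not needed for the equality itself. It only guarantees that the maps in \eqref{eq:abstractdlift} are monotone, which the statement presupposes.

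I expect the main obstacle to be bookkeeping rather than mathematics. One point is fixing the convention for $f_!$ with respect to the order $\geq$: the infimum is taken in the usual order of $[0,1]$, so an empty fibre yields $1$. The other is checking that the label agreement is exactly what non-emptiness of the fibre encodes. Once these are stated carefully, all three cases are immediate.
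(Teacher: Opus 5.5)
Your proposal is correct and follows essentially the same route as the paper: you compute $(\lambda_{\domain,\domain'})_!$ as an infimum over the fibre of $\lambda_{\domain,\domain'}$, observe that the fibre is a singleton exactly for equal constants or for branching pairs with the same label and is empty otherwise (giving $\inf\varnothing = 1$), and match each case to a clause of \cref{def:dlift}. The only difference is presentation: you describe the image of $\lambda_{\domain,\domain'}$ once up front and so merge the paper's separate ``empty fibre'' subcases into a single case.
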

\begin{proof}
Let $\alpha\in F(\domain),\beta\in F(\domain')$. Then using \eqref{eq:abstractdlift} we have
\[(\lambda_{\domain,\domain'})_! (\evfun \circ F(d)) (\alpha,\beta)= \inf_{r\in \Gamma(\alpha,\beta)} \evfun \circ F(d) (r),\]
where $\Gamma(\alpha,\beta) = \{r \in F(\domain\times \domain') \mid F(\proj \domain)(r) = \alpha \land F(\proj {\domain'})(r)=\beta \}$. Then we identify the following cases based on the type of $\alpha\in F(\domain),\beta\in F(\domain')$:
\begin{enumerate}
\item Let $\alpha,\beta\in \Sigma_0$. Then $r\in\Gamma(\alpha,\beta) \iff (\alpha = r \land r = \beta)$. Then we distinguish two cases:
\begin{enumerate}
\item Let $\alpha=\beta$. Then, using the definition of $\evfun$ we find
\[(\lambda_{\domain,\domain'})_! (\evfun \circ F(d)) (\alpha,\beta) = \evfun \circ F(d) (\alpha) = 0 = \bar d (\alpha,\beta) \]
\item\label{case1b} Let $\alpha\neq \beta$. Then, $\Gamma(\alpha,\beta)=\emptyset$ and using empty infima corresponds to $1$ we find that
\[
(\lambda_{\domain,\domain'})_! (\evfun \circ F(d)) (\alpha,\beta) = 1 = \bar d(\alpha,\beta).
\]
\end{enumerate}
\item Let $\alpha\in \Sigma_0$ and $\beta\in \Sigma_2 \times \domain' \times \domain'$. Then $\Gamma(\alpha,\beta)=\emptyset$ and the remainder is similar to \cref{case1b}.
\item Let  $\alpha \in \Sigma_2 \times \domain \times \domain$ and $\beta \in \Sigma_0$. Similar to the above case.
\item Let $\alpha \in \Sigma_2 \times \domain \times \domain$ and $\beta \in \Sigma_2 \times \domain' \times \domain'$.  Then we further distinguish two cases (below $\proj {\Sigma_2}$ is the mapping that project elements from $\Sigma_2$):
\begin{enumerate}
\item Let $\proj {\Sigma_2} (\alpha) = \proj {\Sigma_2}(\beta)$. I.e., there are $\sigma\in\Sigma_2$, $w_1,w_2\in \domain$, and $w_1',w_2'\in\domain'$ such that $\alpha=(\sigma,w_1,w_2)$ and $\beta=(\sigma,w_1',w_2')$. Then, there is a unique $r\in \Gamma(\alpha,\beta)$ and is given by $r=(\sigma,(w_1,w_1'),(w_2,w_2'))\in F(\domain \times \domain')$. Using the definition of $\evfun$ we get:
\[
(\lambda_{\domain,\domain'})_! (\evfun \circ F(d)) (\alpha,\beta) = \evfun \circ F(d)(r) = \fun(d(w_1,w_1'), d(w_2,w_2')) = \bar d(\alpha,\beta).
\]
\item Let $\proj {\Sigma_2} (\alpha) \neq \proj {\Sigma_2}(\beta)$. Then $\gamma(\alpha,\beta)=\emptyset$ and the remainder is similar to \cref{case1b}.
\end{enumerate}
\end{enumerate}
\end{proof}

\end{remark}

\begin{definition}
A distance function $d: \domain \times \domain' \rightarrow [0,1]$ is a \emph{bisimulation distance function} iff $\forall (w,w') \in \domain \times \domain' \quad d(w,w') \geq \bar{d} (\gamma(w) , \gamma'(w'))$, where
\[
\bar{d}(\gamma(w), \gamma'(w') )=
\begin{cases}
1 & \text{if } \ell(w) \neq \ell'(w') \\
0 & \text{if } \ell(w) = \ell'(w') \land \arr(w) = 0 \\
\fun\bigl(d(w.0,w'.0) \, , \, d(w.1,w'.1)\bigr)& \text{otherwise}
\end{cases}
\]
\end{definition}

Just like traditional bisimilarity $\bisim$ is the largest bisimulation relation, we can now define the bisimulation distance $\bd$ as the greatest post fixpoint of the following functional:
\[
([0,1]^{\domain \times \domain'},\geq) \ \xrightarrow{\bar\_} \ ([0,1]^{F(\domain) \times F(\domain')},\geq) \ \xrightarrow{(\gamma \times \gamma') \circ \_} \ ([0,1]^{\domain \times \domain'},\geq).
\]
Note that the well known Knaster-Tarski fixpoint theorem guarantees the existence of $\bd$.
We say that the \emph{bisimulation distance} of $\subtree{w}$ and ${\subtree{w'}'}$ is $\varepsilon$ iff $\bd (\subtree{w} , {\subtree{w'}'})=\varepsilon$. In particular, the bisimulation distance of $\bintree$ and $\bintree'$ is $\varepsilon$ iff $\bd(\rootnode,\rootnode)=\varepsilon$.

\subsection {Bisimulation Distance, Game-Theoretically}
\label {bdistance game}

 Bisimulation distance enables one to analyse how far apart the behaviours of two trees sit. This can be formalised by introducing a two-player game (cf.\thinspace \cref{def:bdgame}) in which Verifier aims to establish that the distance between the trees is at most $\varepsilon$, while Falsifier attempts to decrease $\varepsilon$ to a point where Verifier can no longer maintain a winning strategy.

\begin{definition}\label{def:bdgame}
Let $\bintree$ and $\bintree'$ be two trees over the domains $\domain$ and $\domain'$, respectively. The \emph{bisimulation distance game} between $\bintree$ and $\bintree'$ is a two player game defined as in  \Cref{table:bisimdistance-game}.

If a player cannot play then they lose, and over an infinite play $\exists$ wins the game. If starting from a position $(w,w',\varepsilon)$ Verifier wins the game we say that $(w,w',\varepsilon)$ is a winning position for $\exists$ and we denote the set of such positions as $\bdwin$. Furthermore, if a triple $(w,w',\varepsilon)$ belongs to $\bdwin$, we say that $\varepsilon$ is a winning distance for $w,w'$.
\end{definition}
\begin{table}[H]
\centering
\begin{tabular}{|c|c|c|}
\hline
\textbf{Position} & \textbf{Player} & \textbf{Admissible moves}  \\ \hline
$(w,w',\varepsilon_1)\in \domain \times \domain'\times [0,1]$ & $\exists$ & $\{ d: \domain \times \domain' \rightarrow [0,1]  \mid \bar{d}(\gamma(w) , \gamma' (w')) \leq \varepsilon_1$ \} \\ \hline
$d \in \domain \times \domain'\rightarrow[0,1]$ & $\forall$ & $\{(v , v' , \varepsilon_2) \in \domain\times \domain' \times [0,1] \mid d(v , v') \leq \varepsilon_2\}$ \\ \hline
\end{tabular}
\caption{The bisimulation distance game for two binary trees $\bintree$ and $\bintree'$}
\label{table:bisimdistance-game}
\end{table}

\begin{proposition}
\label{prop:basic bdwins}
The bisimulation distance game has the following properties:
\begin{enumerate}
\item \label{prop:basic bdwins:1} For all $w \in \domain$ and $w' \in \domain' $, the triple $(w,w' , 1) \in \bdwin$.
\item \label{prop:basic bdwins:2} If $(w,w',\varepsilon_1) \in \bdwin$ and $\varepsilon_2 \geq \varepsilon_1$, then $(w,w' , \varepsilon_2) \in \bdwin$.
\end{enumerate}
\end{proposition}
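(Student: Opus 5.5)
For item~\ref{prop:basic bdwins:1}, I will give an explicit strategy for $\exists$ that can be played forever from any position of the form $(v,v',1)$. At such a position, $\exists$ plays the constant distance $d_1 \colon \domain \times \domain' \to [0,1]$ with $d_1(u,u') = 1$ for all $u,u'$. This move is admissible because $\bar{d_1}(\gamma(v),\gamma'(v'))$ takes values in $[0,1]$. This follows from \cref{def:dlift}, since each of the cases $0$, $1$ and $\fun(\cdot,\cdot)$ lies in $[0,1]$, the codomain of $\fun$. So $\bar{d_1}(\gamma(v),\gamma'(v')) \le 1$.

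Falsifier must then pick some $(u,u',\varepsilon_2)$ with $d_1(u,u') = 1 \le \varepsilon_2$. Since $\varepsilon_2 \in [0,1]$, this forces $\varepsilon_2 = 1$, so the play returns to a position of the same form. The set of positions to which Falsifier may move is nonempty (for instance $(\rootnode,\rootnode,1)$), but that does not matter. Either $\forall$ gets stuck and loses, or the play goes on forever, which $\exists$ wins. Hence every $(w,w',1)$ lies in $\bdwin$. The invariant here is simply ``the current threshold is $1$'', and it is preserved by the strategy above.

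For item~\ref{prop:basic bdwins:2}, I will use a strategy-copying argument. Suppose $\exists$ has a winning strategy $S$ from $(w,w',\varepsilon_1)$, and let $\varepsilon_2 \ge \varepsilon_1$. From $(w,w',\varepsilon_2)$, $\exists$ plays the very distance $d$ that $S$ prescribes at $(w,w',\varepsilon_1)$. This is admissible because $\bar d(\gamma(w),\gamma'(w')) \le \varepsilon_1 \le \varepsilon_2$. Falsifier's options after $d$ depend only on $d$ and not on the threshold of the previous position. So every reply $(v,v',\varepsilon)$ of $\forall$ is also a legal reply in the game started at $(w,w',\varepsilon_1)$. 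From that point on, $\exists$ continues exactly according to $S$.

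Every play consistent with this modified strategy is then, after its first position, identical to a play consistent with $S$. It is therefore won by $\exists$: either $\forall$ gets stuck or the play is infinite. The only point needing care, and the main (mild) obstacle, is to state precisely that the set of moves available to $\forall$ at a position $d$ is independent of history. This is what makes the copying legitimate. Once that is observed, both items follow directly from the rules in \Cref{table:bisimdistance-game}.
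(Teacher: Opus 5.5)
Your proof is correct and follows the paper's argument. For item 1 you play the constant-$1$ distance at every round, and you additionally spell out why this move is admissible and why Falsifier is forced back to threshold $1$. For item 2 you reuse Verifier's winning first move $d$, which stays admissible since $\bar d(\gamma(w),\gamma'(w'))\le\varepsilon_1\le\varepsilon_2$, and then continue with the original strategy; your observation that Falsifier's options depend only on $d$ is what the paper uses implicitly when it says every response of $\forall$ leads to a position in $\bdwin$.
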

\begin{proof}
For each part we present a winning strategy for Verifier i.e a valid move for each of their turns:
\begin{enumerate}
\item At each round, $\exists$ can choose the constant distance function $d= 1$. Since the starting position is $(w,w' , 1)$ this $d$ satisfies the property $\bar{d}(\gamma(w) , \gamma' (w')) \leq 1$ and the game moves, by the choice of $\forall$ to the next position $(v,v',1)$ and the same argument applies.

\item Assume $(w,w',\varepsilon_1) \in \bdwin$. Then $\exists$ has a winning strategy from this position, in particular a first move given by some distance $d$ such that
$$\bar{d}(\gamma(w),\gamma'(w')) \leq \varepsilon_1 \leq \varepsilon_2$$
So, $d$ is also a valid move from $(w,w',\varepsilon_2)$. Moreover, since $d$ is part of a winning strategy, every response of $\forall$ leads to a position in $\bdwin$. This means, the same strategy is winning from $(w,w',\varepsilon_2)$, and therefore $(w,w',\varepsilon_2) \in \bdwin$.
\end{enumerate}
\end{proof}

\begin{theorem}
\label {thm:win iff bdistance}
For a bisimulation distance game associated with $\bintree$ and $\bintree'$ we have
 \[
   \forall_{w\in\domain,w'\in\domain',\varepsilon\in [0,1]}\quad (w,w',\varepsilon) \in \bdwin \Longleftrightarrow \bd (\subtree{w} , {\subtree{w'}'})\leq \varepsilon .
 \]
\end{theorem}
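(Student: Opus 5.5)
The plan is to prove the two implications separately. For $(\Leftarrow)$ Verifier will simply play $\bd$ itself. For $(\Rightarrow)$ I will extract a bisimulation distance function from the set $\bdwin$ and compare it with $\bd$. First I make explicit what ``greatest post fixpoint'' means here. The functional $d \mapsto \bar d\circ(\gamma\times\gamma')$ is monotone on $([0,1]^{\domain\times\domain'},\geq)$. This holds because the lifting of \cref{def:dlift} is monotone in $d$: the cases with value $0$ and $1$ do not depend on $d$, and in the remaining case $\fun$ is order preserving. Its post fixpoints w.r.t.\ $\geq$ are exactly the bisimulation distance functions. By Knaster--Tarski, $\bd$ is their join in $\geq$, i.e.\ their pointwise infimum. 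Hence I will use two facts:
\begin{itemize}
\item[(i)] $\bd(w,w') \geq \overline{\bd}(\gamma(w),\gamma'(w'))$ for all $w,w'$;
\item[(ii)] $\bd \leq d$ pointwise for every bisimulation distance function $d$.
\end{itemize}

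\textbf{Direction $(\Leftarrow)$.} Assume $\bd(\subtree{w},\subtree{w'}')\leq\varepsilon$. Verifier's strategy is to play $d=\bd$ at every position of the form $(v,v',\varepsilon')$. I maintain the invariant $\bd(v,v')\leq\varepsilon'$. Then by (i), $\overline{\bd}(\gamma(v),\gamma'(v'))\leq \bd(v,v')\leq\varepsilon'$, so the move $\bd$ is admissible. Any answer $(u,u',\varepsilon'')$ of Falsifier satisfies $\bd(u,u')\leq\varepsilon''$ by the admissibility rule, so the invariant is preserved. Thus Verifier is never stuck. Falsifier always has a move (e.g.\ $(u,u',1)$). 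Hence every play is infinite or ends with Falsifier stuck, and in both cases $\exists$ wins.

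\textbf{Direction $(\Rightarrow)$.} Define
\[
d^*(w,w') = \inf\{\varepsilon\in[0,1] \mid (w,w',\varepsilon)\in\bdwin\}.
\]
The set is nonempty by \cref{prop:basic bdwins}(\ref{prop:basic bdwins:1}), so $d^*$ is well defined with values in $[0,1]$. Clearly $(w,w',\varepsilon)\in\bdwin$ implies $d^*(w,w')\leq\varepsilon$. So it suffices to show that $d^*$ is a bisimulation distance function, since then (ii) gives $\bd\leq d^*\leq\varepsilon$.

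Fix $(w,w')$ and any $\varepsilon$ with $(w,w',\varepsilon)\in\bdwin$. Let $d$ be Verifier's first move in a winning strategy, so $\bar d(\gamma(w),\gamma'(w'))\leq\varepsilon$. For every pair $(v,v')$, Falsifier may answer with $(v,v',d(v,v'))$. Since the strategy is winning, this position lies in $\bdwin$, so $d^*(v,v')\leq d(v,v')$. Hence $d^*\leq d$ pointwise. By monotonicity of the lifting, $\overline{d^*}(\gamma(w),\gamma'(w'))\leq \bar d(\gamma(w),\gamma'(w'))\leq\varepsilon$. Taking the infimum over all such $\varepsilon$ yields $\overline{d^*}(\gamma(w),\gamma'(w'))\leq d^*(w,w')$, as required.

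The step I expect to be the main obstacle is this last one. The infimum defining $d^*$ need not be attained, so one cannot directly use a winning strategy ``at $d^*(w,w')$''. The argument avoids this by bounding $\overline{d^*}$ by every winning $\varepsilon$ separately, and this only needs monotonicity of $\fun$, not continuity. The other point needing care is that Falsifier may choose the exact threshold $d(v,v')$, which is what lets us compare $d^*$ with an arbitrary winning move $d$.
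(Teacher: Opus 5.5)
Your proof is correct and follows essentially the same route as the paper. For $(\Leftarrow)$ Verifier plays $\bd$ at every round. For $(\Rightarrow)$ you show that the infimum of winning thresholds is a bisimulation distance function, by letting Falsifier answer with the exact thresholds $d(v,v')$ and using monotonicity of the lifting; minimality of $\bd$ then concludes.

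The only difference is cosmetic: you bound the lifted infimum by each winning $\varepsilon$ directly, whereas the paper first uses upward closure (\cref{prop:basic bdwins}) to pass through every $\kappa$ above the infimum. Your version therefore also avoids the paper's separate case for infimum $1$.
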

\begin{proof}
\fbox{$\Leftarrow$} Assume that $\bd(\subtree{w},{\subtree{w'}}') \leq \varepsilon$, hence $d^*(w,w') \leq \varepsilon$.
We describe a winning strategy for Verifier: at every round, $\exists$ choose the distance $d^*$.

At the initial position $(w,w',\varepsilon)$, this is a valid move since $\varepsilon \geq d^*(w,w') \geq \bar{d}(\gamma(w),\gamma'(w'))$.
After any move of Falsifier, the game moves to a basic position $(w_1,w'_1,\varepsilon_1)$ with $d^*(w_1,w'_1)\leq \varepsilon_1$, and the same argument applies. At every position $(v,v',\kappa)$ reached during the play, the invariant $d^*(v,v') \leq \kappa$ is preserved. Hence, $\exists$ can continue playing $d^*$ indefinitely which means $\exists$ has a winning strategy starting from $(w,w',\varepsilon)$. So $(w,w',\varepsilon)\in \bdwin$.
\\

 \fbox{$\Rightarrow$} Assume $(w,w',\varepsilon) \in \bdwin$. We first construct a bisimulation distance $d$ such that $d(w,w') \leq \varepsilon$. Define
$$
d(v,v') = \inf \{ \kappa \mid (v,v',\kappa) \in \bdwin\}
\quad \text{for all } v \in \domain,\; v' \in \domain'.
$$

We show that $d$ is a bisimulation distance, i.e.
$$
d(v,v') \geq \bar{d}(\gamma(v), \gamma'(v'))
\quad \text{for all } v \in \domain,\; v' \in \domain'.
$$

Fix $v,v'$ and let $d(v,v') = \mu$. If $\mu = 1$, the inequality is trivial. Suppose $\mu < 1$. By definition of $d$ and \Cref{prop:basic bdwins}(\Cref{prop:basic bdwins:2}), for every $\kappa > \mu$ we have $(v,v',\kappa) \in \bdwin$. Hence $\exists$ has a winning strategy from $(v,v',\kappa)$, by playing a distance $d_\kappa$ such that
$
\bar{d_\kappa}(\gamma(v), \gamma'(v')) \leq \kappa,
$
and any valid choice of $\forall$, in particular all triples $(u,u',d_\kappa(u,u'))$, belong to $\bdwin$.
\\
By definition of $d$, it follows that $d(u,u') \leq d_\kappa(u,u')$ for all $u,u'$, hence $d \succeq d_\kappa$. So by monotonicity of the lifting,
$$
\bar{d}(\gamma(v), \gamma'(v')) \leq \bar{d_\kappa}(\gamma(v), \gamma'(v')) \leq \kappa.
$$
Since this holds for all $\kappa > \mu$, we obtain
$$
\bar{d}(\gamma(v), \gamma'(v')) \leq \mu = d(v,v').
$$
Therefore, $d$ is a bisimulation distance.

Finally, since $d^*$ is the best bisimulation distance, we have $d^*(w,w') \leq d(w,w')$. Moreover, $d(w,w') \leq \varepsilon$ by definition of $d$ and the assumption $(w,w',\varepsilon) \in \bdwin$. Hence
$$
\bd(\subtree{w},{\subtree{w'}}') = d^*(w,w') \leq d(w,w') \leq \varepsilon,
$$
which concludes the proof.
\end{proof}

\begin{corollary}
The set of winning distances for any $w \in \domain$ and $w' \in \domain' $ is of the form of a closed interval $[\varepsilon , 1]$ where $\varepsilon =  \bd (\subtree{w} , {\subtree{w'}}')$.
\end{corollary}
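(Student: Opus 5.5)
This follows almost immediately from \cref{thm:win iff bdistance}. Fix $w\in\domain$ and $w'\in\domain'$, and put $\varepsilon = \bd(\subtree{w},{\subtree{w'}'})$. The set of winning distances is $W=\{\kappa\in[0,1]\mid (w,w',\kappa)\in\bdwin\}$. The plan is to show that $W=[\varepsilon,1]$ by proving both inclusions, each obtained by instantiating the equivalence of \cref{thm:win iff bdistance} at a suitable $\kappa$.

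\emph{Inclusion $[\varepsilon,1]\subseteq W$.} Let $\kappa\in[\varepsilon,1]$. Then $\bd(\subtree{w},{\subtree{w'}'})=\varepsilon\le\kappa$, so the direction \fbox{$\Leftarrow$} of \cref{thm:win iff bdistance} gives $(w,w',\kappa)\in\bdwin$. In particular the endpoint $\varepsilon$ itself is a winning distance, which is why the interval is closed at $\varepsilon$. One could also check only $\kappa=\varepsilon$ and then appeal to \cref{prop:basic bdwins}(\ref{prop:basic bdwins:2}) for upward closure, but the theorem already covers every $\kappa$ directly.

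\emph{Inclusion $W\subseteq[\varepsilon,1]$.} Let $\kappa\in W$. Then $\kappa\le 1$ because the positions of the game carry values in $[0,1]$. The direction \fbox{$\Rightarrow$} of \cref{thm:win iff bdistance} gives $\varepsilon=\bd(\subtree{w},{\subtree{w'}'})\le\kappa$.

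Combining the two inclusions gives $W=[\varepsilon,1]$. This interval is nonempty, since $\bd$ takes values in $[0,1]$, which is consistent with \cref{prop:basic bdwins}(\ref{prop:basic bdwins:1}).

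I see no real obstacle. The only point worth noting is that the minimum of $W$ is attained: it is not merely an infimum. This is exactly what the \fbox{$\Leftarrow$} direction of the theorem supplies, because Verifier wins at $\kappa=\varepsilon$ by playing $d^*$ in every round.
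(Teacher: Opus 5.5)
Your proposal is correct and follows the paper's route. The paper treats the corollary as an immediate consequence of \cref{thm:win iff bdistance}, remarking afterwards that the set of winning distances is closed and its infimum is attained. Your two inclusions are exactly the two directions of that theorem, instantiated at each $\kappa\in[0,1]$.
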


Consequently, the set of winning distances is closed, and the infimum is attained. This establishes the existence of a minimal winning distance $\varepsilon$ which coincides with the bisimulation distance between the corresponding subtrees. 

Furthermore, the bisimulation distance can be characterised in terms of winning values of $\varepsilon$: either the optimal value is obtained, showing the exact distance from a property, or we have an upper bound given by a winning $\varepsilon$. Examples of such optimal winning values and their interpretation are discussed in \Cref{sec:optimal epsilon}. 

\subsection {The \texorpdfstring{$\epsilon$}{epsilon}-acceptance Game}

\begin{definition}
Let $\mathbb{A}=(A, a_0, \Delta)$ be a tree automaton and $\bintree$ a binary tree over a domain $\domain$ with successor function $\gamma\colon \domain\to F(\domain)$. The acceptance game  associated with $\mathbb{A}$ and $\subtree{w}$ is a two player game where the moves of each player are shown in \Cref{tab:eaccept game}.
\begin{table}[H]
\centering
\begin{tabular}{|c|c|c|}
\hline
\textbf{Position} & \textbf{Player} & \textbf{Admissible moves}  \\ \hline
$(a,w,\varepsilon_1)$ & $\exists$ & $\{ (\delta (a),w,\varepsilon_1)\in F(A) \times \domain \times [0,1] \mid \delta (a) \in \Delta(a) \}$ \\ \hline
$ (\delta (a),w,\varepsilon_1)\in F(A) \times \domain \times [0,1]$ & $\exists$ & $\{ d: A \times \domain \rightarrow [0,1]  \mid \bar{d}(\delta (a) , \gamma(w)) \leq \varepsilon_1$ \} \\ \hline
$d \in A \times \domain\rightarrow[0,1]$ & $\forall$ & $\{(b , v, \varepsilon_2) \in A \times \domain \times [0,1] \mid d(b , v) \leq \varepsilon_2\}$ \\ \hline
\end{tabular}
\caption{$\epsilon$-acceptance game for a tree automaton $\mathbb{A}$ and a binary tree $\bintree$}
\label{tab:eaccept game}
\end{table}
\noindent
If a player cannot play they lose, and for any infinite play $\exists$ wins.
We denote by $\ekvwin$ the set of winning positions for Verifier and $\subtree{w}$ is said to be $\varepsilon$-accepted by $\mathbb{A}$ when $(a_0, w, \varepsilon) \in \ekvwin$.
\end{definition}

\begin{remark}
\label{rem:min distance}
By a similar argument to the bisimulation distance game, if $(a,w,\varepsilon_1) \in \ekvwin$ and $\varepsilon_1 \le \varepsilon_2$, then $(a,w,\varepsilon_2) \in \ekvwin$. Hence, without loss of generality, we may assume that choices of $\forall$ are of the form $(b,v,d(b,v))$, where $d$ is the function previously chosen by $\exists$.
\end{remark}

\section {\texorpdfstring{$\epsilon$}{epsilon}-Acceptance Is Acceptance Plus Bisimulation Distance}
\label{sec:main thm}
We are now ready to state and prove the main theorem of the paper, establishing the connection between acceptance and quantitative acceptance.

Throughout this section, we work with a tree automaton $\mathbb{A}=(A, a_0, \Delta)$ where every infinite word is accepted and $\Delta(a) \neq \emptyset$ for every $a\in A$ (i.e. the automaton is deadlock free).

\begin{theorem}
\label{thm:main}
Let $\mathbb{A}=(A, a_0, \Delta)$ be a deadlock free tree automaton and let $\bintree$ be a labelled tree. Then the following statements are equivalent:
\begin{enumerate}
\item\label{C1:mainthm} $\bintree$ is $\varepsilon_0$-accepted by $\mathbb{A}$.
\item\label{C2:mainthm} There is a binary tree $\bintree'$ such that $\bd( \bintree' , \bintree)\leq \varepsilon_0$, and $\bintree'$ is accepted by $\mathbb{A}$.
\end{enumerate}
\end{theorem}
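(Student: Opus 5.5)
The plan is to prove the two implications separately, in both cases by transferring strategies along an invariant. Two facts from earlier in the paper do most of the work. The first is \cref{thm:win iff bdistance}, together with the observation that $\bd = d^*$ is the pointwise least bisimulation distance function. The reason is that $\bd$ is the greatest post-fixpoint in the order $\geq$, so $\bd \leq D$ pointwise for every $D$ satisfying $D \geq \bar D\circ(\gamma\times\gamma')$. The second is \cref{rem:min distance}, which lets us assume that Falsifier always answers $(b,v,d(b,v))$. Deadlock freedom will be used twice, through two auxiliary facts.

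\textbf{Auxiliary facts.} (i) For every $a\in A$ and $v\in\domain$ we have $(a,v,1)\in\ekvwin$. Verifier picks any $\delta(a)\in\Delta(a)$, which exists by deadlock freedom, and then the constant function $d\equiv 1$. This move is always legal because $\bar d\le 1$. (ii) Every state $a$ accepts some tree. Unfold a tree from $a$ by always choosing some transition in $\Delta(a)$, which again exists by deadlock freedom. Every play on this tree either ends at a matching leaf or is infinite, and infinite plays are won by $\exists$ since $Acc=A^\omega$.

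\textbf{(2)$\Rightarrow$(1).} Fix $\bintree'$ accepted by $\mathbb A$ with $d^*(\rootnode',\rootnode)\le\varepsilon_0$. Verifier plays in the $\epsilon$-game while maintaining the following invariant at each basic position $(a,w,\varepsilon)$: there is a node $w'$ of $\bintree'$ with $(a,w')\in\kvwin$ and $d^*(w',w)\le\varepsilon$. If $\varepsilon=1$, Verifier switches to the strategy of fact (i). Otherwise $\ell'(w')=\ell(w)$, since labels that differ force $d^*=1$. Verifier then copies the winning transition $\delta(a)$ used at $(a,w')$, which by \cref{rem:smallestrelation} agrees with $\gamma'(w')$. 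In the leaf case any $d$ is legal, because the lifting is $0$. In the branching case $\delta(a)=(\sigma,b_0,b_1)$, and Verifier sets $d(b_i,w.i)=d^*(w'.i,w.i)$ and $d=1$ elsewhere. This is well defined because $w.0\neq w.1$. Its lifting is $\fun(d^*(w'.0,w.0),d^*(w'.1,w.1))\le d^*(w',w)\le\varepsilon$, so the move is legal. Any answer of Falsifier $(b_i,w.i,\kappa)$ with $\kappa\ge d(b_i,w.i)$ restores the invariant, witnessed by $w'.i$. If Falsifier picks a pair outside these two, then $\kappa\ge 1$ and fact (i) applies. Since the invariant is always maintained, Verifier never gets stuck, and infinite plays are won by $\exists$.

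\textbf{(1)$\Rightarrow$(2).} Fix a winning strategy from $(a_0,\rootnode,\varepsilon_0)$. Using \cref{rem:min distance}, build $\bintree'$ top-down by following the strategy along the nodes of $\bintree$, attaching to each visited node $w$ a state $a_w$ and a budget $\varepsilon_w$ such that $(a_w,w,\varepsilon_w)$ is winning.
\begin{itemize}
\item If Verifier chooses $\delta(a_w)=\sigma\in\Sigma_0$, the corresponding node of $\bintree'$ is a leaf labelled $\sigma$.
\item If $\delta(a_w)=(\sigma,b_0,b_1)$ and $\gamma(w)$ is branching with the same label, the node of $\bintree'$ is labelled $\sigma$, its children are matched with $w.0$ and $w.1$, and we continue with $a_{w.i}=b_i$ and $\varepsilon_{w.i}=d(b_i,w.i)$.
\item If $\delta(a_w)=(\sigma,b_0,b_1)$ but $\gamma(w)$ is a leaf or carries a different label, the lifting is $1$, so $\varepsilon_w=1$. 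We then hang under the new $\sigma$-node arbitrary trees accepted from $b_0$ and $b_1$, which exist by fact (ii). These nodes are left unmatched.
\end{itemize}
Acceptance of $\bintree'$ follows because Verifier in the ordinary acceptance game plays the recorded transitions $\delta(a_w)$ on matched nodes and the fact (ii) strategies on grafted parts. For the distance, define $D(u',w)=\varepsilon_w$ when $u'$ is matched with $w$, and $D=1$ otherwise. At a matched pair, either the labels differ, in which case $\bar D=1\le\varepsilon_w$ because the move was legal, or the labels agree, in which case $\bar D=\fun(\varepsilon_{w.0},\varepsilon_{w.1})=\bar d(\delta(a_w),\gamma(w))\le\varepsilon_w$. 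Hence $D$ is a bisimulation distance function, and $\bd(\bintree',\bintree)\le D(\rootnode',\rootnode)=\varepsilon_0$.

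I expect the main obstacle to be the bookkeeping in this last construction. One has to define the domain of $\bintree'$ cleanly as a prefix- and sibling-closed set and keep the matching functional; \cref{prop:tree substitution} can be used for the grafting step. One also has to check that the labels of $\bintree'$ follow $\delta(a_w)$ rather than $\ell(w)$, so that mismatches are paid for exactly by budget $1$.
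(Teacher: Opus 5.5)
Your proof is correct. Direction (2)$\Rightarrow$(1) is the same invariant-based strategy transfer as in the paper, while (1)$\Rightarrow$(2) takes a genuinely different route in its details.

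The paper works throughout with the game characterisation of $\bd$:
\begin{itemize}
\item In (2)$\Rightarrow$(1) it carries the invariant $(w',w,\varepsilon)\in\bdwin$. It builds Verifier's move as $d^R(b,v)=\inf_{b\,R\,v'}d(v',v)$ from a winning move $d$ of the bisimulation distance game.
\item In (1)$\Rightarrow$(2) it cuts the run tree at the \emph{first} positions with budget $1$ and replaces each such subtree by a tree $\bintree_{(b)}$ accepted from the state found there; this is the bookkeeping with $\domain_1,\domain_2,\domain_3$. It then proves $(\rootnode,\rootnode,\varepsilon_0)\in\bdwin$ by exhibiting a strategy in the bisimulation distance game, and converts via \cref{thm:win iff bdistance}.
\end{itemize}

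You never actually need that game. You only use the Knaster--Tarski fact that $\bd$ is the least $D$ with $D\geq\bar D\circ(\gamma'\times\gamma)$, and you carry $d^*(w',w)\le\varepsilon$ directly. In (1)$\Rightarrow$(2) you keep following Verifier's transitions even at budget $1$ and label $\bintree'$ by $\delta(a_w)$. You graft only below genuine label mismatches, and you certify the bound by the explicit post-fixpoint $D$.

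What your route buys:
\begin{itemize}
\item There is no first-occurrence frontier, and the matched part is automatically prefix-closed. By contrast, the paper's $\domain_1$ as literally defined can also contain nodes lying below a budget-$1$ node, where they collide with the grafted part.
\item The key inequality $\bar D(\gamma'(w),\gamma(w))\le\varepsilon_w$ is literally the legality of Verifier's move.
\item Setting $d$ explicitly on the two pairs $(b_i,w.i)$ sidesteps the infimum in $d^R$. That infimum does not give $d^R(b_i,w.i)=d(w'.i,w.i)$ when $b_0=b_1$, although the paper's argument can be repaired.
\end{itemize}
The paper's route, in turn, stays uniformly inside the game-theoretic framework it sets up.

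One small point to tighten: in the leaf case of (2)$\Rightarrow$(1), any $d$ is legal, but Verifier should choose $d\equiv 1$. Otherwise Falsifier could answer with a small budget at an arbitrary pair, which your fact (i) does not cover.
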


We prove the two directions of the theorem separately; the proof of the direction $2\implies 1$ is immediate once we have the following lemma which states that the positions of the form $(a,w,1)$ are always winning for Verifier in the $\epsilon$-acceptance game.


\begin{lemma}
\label{lem:1 is accepted}
For a deadlock free automaton $\mathbb A$ and a tree $\bintree$,  we have $(a,w,1)\in\ekvwin$ (for every $a\in A,w\in\domain$).
\end{lemma}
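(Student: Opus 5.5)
We exhibit an explicit positional strategy for Verifier from every position of the form $(a,w,1)$, mirroring the proof of \Cref{prop:basic bdwins}(\ref{prop:basic bdwins:1}), and show that it keeps the play forever inside positions with budget $1$.

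At a basic position $(a,w,1)$, Verifier first has to pick some $\delta(a)\in\Delta(a)$. Deadlock freedom gives $\Delta(a)\neq\emptyset$, so such a move always exists; she picks an arbitrary one. This is exactly where the hypothesis enters. Without it Verifier would be stuck at the first move and lose, even with budget $1$.

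At the resulting position $(\delta(a),w,1)$, Verifier plays the constant distance $d\equiv 1$ on $A\times\domain$. We must check that $\bar d(\delta(a),\gamma(w))\leq 1$. This holds because, by \Cref{def:dlift}, the lifted distance always takes values in $[0,1]$: it equals $0$, $1$, or $\fun(\cdot,\cdot)$, and $\fun$ has codomain $[0,1]$. So the move is admissible regardless of whether the labels match. Falsifier must then choose some $(b,v,\varepsilon_2)$ with $1=d(b,v)\leq\varepsilon_2\leq 1$, so every next basic position again has the form $(b,v,1)$. Falsifier always has an available move here (for instance $(a,w,1)$ itself), but whatever he chooses, the same situation recurs.

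By induction on rounds, every play consistent with this strategy either continues forever or ends with Falsifier unable to move; Verifier herself is never stuck. Infinite plays are won by $\exists$, so $(a,w,1)\in\ekvwin$.

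I do not expect a real obstacle. The only points that need care are the admissibility bound $\bar d\le 1$, which covers the label-mismatch case where $\bar d=1$, and the use of deadlock freedom at the first step.
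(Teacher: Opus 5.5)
Your proposal is correct and follows essentially the same argument as the paper: use deadlock freedom to pick some $\delta(a)\in\Delta(a)$, play the constant distance $d\equiv 1$, and note that every response of Falsifier is again a position of the form $(b,v,1)$, so Verifier is never stuck. Your explicit check that $\bar d(\delta(a),\gamma(w))\le 1$ also holds when the labels do not match is a detail the paper leaves implicit.
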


\begin{proof}
At position $(a, w, 1)$, $\exists$ can choose any $\delta(a) \in \Delta(a)$ since $\Delta(a) \neq \emptyset$ and distance function $d$ is the constant $1$ function. Then any choice of $\forall$ will be of the form $(b, v, 1)$ that will be the new position of the game and the same argument applies. Therefore $\exists$ always has a move, and the play will be infinite. So $(a, w, 1) \in \ekvwin$.
\end{proof}

\begin{proof}[Proof of \cref{thm:main} ($2\implies 1$)]
Assume the existence of such $\bintree'$ as given in \cref{C2:mainthm} of \cref{thm:main}. We prove a stronger property in the sense:
\[
\forall a\in A,w\in \domain,w'\in\domain'\ \left( (a,w')\in\kvwin \land (w',w,\varepsilon)\in\bdwin \right) \implies (a,w,\varepsilon)\in\ekvwin.
\]
So let $a,w,w'$ be such that $(a,w')\in\kvwin$ and $(w',w,\varepsilon)\in\bdwin$ hold. Without loss of generality, let $\varepsilon<1$; otherwise, the result follows from \cref{lem:1 is accepted}. Since $(a,w')\in\kvwin$ Verifier plays some $\delta(a)\in\Delta(a)$ and relation $R\subseteq A\times\domain'$ such that $\delta(a) \mathrel {\bar R} \gamma'(w')$. Moreover, since $(w',w,\varepsilon)\in\bdwin$ we know that Verifier in the bisimulation distance game plays some $d\colon \domain' \times\domain \to [0,1]$ such that $\bar d(\gamma'(w'),\gamma(w)) \leq \varepsilon$. So consider the function
\[
d^R (b,v) = \inf_{v'\in \domain', b \mathrel R v'} d(v',v) \quad \text{(for each $b\in A,v\in\domain$)}.
\]
We show that $d^R$ is a valid move in the $\epsilon$-acceptance game. Note that $\ell(a)=\ell(w')$ (resp.\thinspace $\ell(w') = \ell(w)$) because $\delta(a) \mathrel {\bar R} \gamma'(w')$ (resp.\thinspace $\bar d(\gamma'(w'),\gamma(w))\leq \varepsilon<1$). Thus, $\ell(a)=\ell(w)$ and performing the case distinction on the type of $\ell(w)$ we get the following two cases:
\begin{enumerate}
\item Let $\ell(w)\in \Sigma_0$. Then by \cref{rem:smallestrelation} we know that $R=\emptyset$; therefore, $d^R$ corresponds to constant $1$ function. Moreover, from \cref{def:dlift}, we know that $\overline{d^R}(\delta(a),\gamma(w))=0\leq \varepsilon$; thus $d^R$ is a valid move. Furthermore, any response of Falsifier will be of the form $(b,v,1)$ which belongs to $\ekvwin$ by \cref{lem:1 is accepted}. So $(a,w,\varepsilon)\in\ekvwin$.
\item Let $\ell(w)=\sigma\in \Sigma_2$. Then, $\delta(a)=(\sigma,b_0,b_1)$ (for some $b_0,b_1\in A$), $\gamma(w)=(\sigma,w.0,w.1)$, and $\gamma'(w')=(\sigma,w'.0,w'.1)$. By \cref{rem:smallestrelation} we know that $R=\{(b_0,w'.0),(b_1,w'.1)\}$. Then $d^R(b_i,w.i) = d(w'.i,w.i)$ (for $i\in\{0,1\}$). Moreover, using \cref{def:dlift} we get:
\begin{align*}
\overline{d^R}(\delta (a),\gamma(w)) = &\ \fun(d^R(b_0,w.0),d^R(b_1,w.1) \\
=&\ \fun(d(w_0',w_0),d(w_1',w_1)))\\
 =&\  \bar d(\gamma'(w'),\gamma(w)) \leq \varepsilon.
\end{align*}
Thus, $d^R$ is a valid move in this case. Furthermore, any response of Falsifier in the $\epsilon$-acceptance game will be either of the form $(b,v,1)$ (which is winning for $\exists$ by \cref{lem:1 is accepted}), or is of the form $(b_i,w.i,\varepsilon')$ (for $\varepsilon'\geq d^R(b_i,w.i)=d(w'.i,w.i)$ and $i\in\{0,1\}$).

So it remains to show that $(b_i,w.i,\varepsilon') \in \ekvwin$. Suppose Falsifier in the $\epsilon$-acceptance game plays with $(b_i,w.i,\varepsilon')$ (for some $i\in\{0,1\}$ and $\varepsilon'\geq d(w'.i,w.i)$). Then $(b_i, w'.i) \in \kvwin$ as it can be the response of Falsifier in the acceptance game to the winning move $R$ (cf.\thinspace \cref{rem:smallestrelation}). In addition, $(w'.i, w.i, \varepsilon) \in \bdwin$ as it is a valid response for Falsifier in the bisimulation distance game.  Thus, we are back to the same condition as the start of the round and so $\exists$ can repeat the same argument.\qedhere
\end{enumerate}
\end{proof}

The converse direction (i.e.``$1 \implies 2$'') requires more work since we need to construct a labelled tree $\bintree'$ which itself is based on the idea of a run tree. Suppose $(a,w,\varepsilon)\in\ekvwin$, then a run tree $\bintree_{(a,w,\varepsilon)}$ is simply the tree generated by following a winning strategy of Verifier (see \cref{prop:domainRunTree} for the formal definition of domain of a run tree). Now if we encountered a position $(b,v,1)$ while exploring the run tree $\bintree_{(a,w,\varepsilon)}$, the successor function functions associated to $b$ and $v$ can be of different type due to the definition of distance lifting (\cref{def:dlift}). Moreover, since these positions $(b,v,1)$ are winning for Verifier in the $\epsilon$-acceptance game, thus we cannot use the successor function of $v$ to construct our $\bintree'$. 

To this end, we replace the first occurrence of such positions $(b,v,1)$ in the run tree by the tree $\bintree_{(b)}$ accepted by the automaton state $b$ (cf.\thinspace \cref{lem:existence of accepted tree}). In other words, the tree $\bintree'$ (used in the proof of ``$1\implies 2$'') is the tree obtained by removing nodes of the form $(b,v,1)$ from a run tree $\bintree_{(a,w,\varepsilon)}$ and then attaching the tree $\bintree_{(b)}$ as defined in the next lemma. 
 $(a, v, 1)$ which is winning for $\exists$ by Lemma 1, or is of the form $(b_i, w.i, \varepsilon)$ for $\varepsilon \geq d(b_i, w.i), i \in \{0, 1\}$.

\begin{lemma}
\label{lem:existence of accepted tree}
For any $a \in A$, there exists a binary tree $\bintree_{(a)}$ such that $(a , \rootnode) \in \kvwin$.
\end{lemma}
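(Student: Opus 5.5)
We build $\bintree_{(a)}$ by unfolding the automaton from $a$ and letting Verifier's choices dictate the labels, so that the tree and a winning strategy are constructed together. Deadlock freedom is the only hypothesis we use. Fix, via the axiom of choice, a function $\delta\colon A\to FA$ with $\delta(a)\in\Delta(a)$ for every $a\in A$; this is possible since $\Delta(a)\neq\emptyset$. We then define, simultaneously by induction on the length of $w\in 2^*$, a partial domain $\domain$, a labelling $\ell$, and a state assignment $s\colon\domain\to A$:
\begin{itemize}
\item put $\varepsilon\in\domain$ with $s(\varepsilon)=a$;
\item if $w\in\domain$ and $\delta(s(w))=\sigma\in\Sigma_0$, set $\ell(w)=\sigma$ and do not extend $w$;
\item if $\delta(s(w))=(\sigma,b_0,b_1)$ with $\sigma\in\Sigma_2$, set $\ell(w)=\sigma$, put $w.0,w.1\in\domain$, and set $s(w.i)=b_i$.
\end{itemize}

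The first check is that $\bintree_{(a)}=(\domain,\ell)$ is a labelled binary tree in the sense of the definition in \cref{subsec:trees}. Prefix-closure holds because nodes are only ever added as children of existing nodes. Sibling-closure holds because both children are added together. The arity condition $\ell(w)\in\Sigma_2\iff w.\{0,1\}\subseteq\domain$ holds by the case split in the construction. By construction the successor function satisfies $\gamma(w)=\bigl(\ell(w),w.0,w.1\bigr)$ in the binary case and $\gamma(w)=\ell(w)$ in the leaf case.

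Next we show that $(a,\rootnode)\in\kvwin$. We prove the stronger invariant that $(s(w),w)\in\kvwin$ for all $w\in\domain$, by exhibiting a strategy for $\exists$ that maintains positions of the form $(s(w),w)$. At position $(s(w),w)$, $\exists$ plays $\delta(s(w))\in\Delta(s(w))$. If $\delta(s(w))=\sigma\in\Sigma_0$, then $\gamma(w)=\sigma$, so $\exists$ plays $R=\emptyset$, which satisfies $\delta(s(w))\mathrel{\bar R}\gamma(w)$. Falsifier then has no move and loses. Otherwise $\delta(s(w))=(\sigma,b_0,b_1)$ and $\gamma(w)=(\sigma,w.0,w.1)$, and $\exists$ plays $R=\{(b_0,w.0),(b_1,w.1)\}$, which lies in the lift $\bar R$ as required (cf.\ \cref{rem:smallestrelation}). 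Every response of $\forall$ is then a pair $(b_i,w.i)=(s(w.i),w.i)$, so the invariant is preserved. Consequently $\exists$ never gets stuck, and every play either ends with $\forall$ unable to move or is infinite. Both outcomes are wins for $\exists$, because the acceptance condition is $A^\omega$.

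The only potentially delicate point is well-definedness: infinite branches must be allowed. This is unproblematic, since $\domain$ is defined as the union of its finite levels and the paper's trees explicitly admit infinite branches. Choosing $\delta$ as a single function of the state, rather than of the node, makes the construction uniform; it is not needed for correctness.
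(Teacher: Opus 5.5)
Your proof is correct and takes the same approach as the paper: your state assignment $s\colon\domain\to A$ is the paper's inductively defined set $S=\bigcup_j S_j$ of pairs $(b,w)$, and the paper's remark that each node occurs in at most one such pair is exactly what makes $s$ a function. Your strategy, playing $\delta(s(w))$ and then $R=\emptyset$ or $R=\{(b_0,w.0),(b_1,w.1)\}$ so that every reachable position stays of the form $(s(v),v)$, is also the paper's; the only difference is that, since $A$ is finite, fixing $\delta$ does not need the axiom of choice.
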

\begin{proof}
The idea of the proof is to choose a fix a family of successor functions $\delta (a) \in \Delta(a)$ since $\Delta(a)$ (for each $a\in A$) is nonempty.
Now inductively define $S_n$ as follows:
\[S_0= \{ (a, \rootnode) \} \qquad S_{j+1} = \{ (c, w.0), (c', w.1) \mid \exists b\in A \quad (b, w) \in S_k \land \delta(b) = (\sigma, c, c') \}. \]
It is easy to show by induction that these sets has the property that any $w$ can appear at most once in a pair $(b, w) \in S_{|w|}$ (the notation $|w|$ denotes the length of the word $w$).
Let $S = \bigcup_{j=0}^{\infty} S_j$. Now we define $\bintree_{(a)}$ as follows:
\[ \gamma_a\colon \domain_{(a)} \to F\domain_{(a)}   \quad \text{with }\gamma_a(w)=
\begin{cases}
\delta(b) & \text{if } (b,w) \in S_{|w|} \land \delta (b)\in \Sigma_0\\
(\sigma, w.0, w.1) & \text{if } (b,w) \in S_{|w|} \land \delta (b)=(\sigma,c,c')
\end{cases},\]
where $\domain_{(a)} = \pi_2 (S)$ where $\pi_2(S)$ projects the second component from the elements in $S$. Notice that this $\gamma$ is well defined as in the second case $(c, w.0), (c', w.1)$ belong to $S$ and so $w.0, w.1$ belong to $\domain_{(a)}$. We next show that $(a, \rootnode) \in \kvwin$. At position $(a, \rootnode) \in S$, let $\exists$ choose the same $\delta(a) \in \Delta(a)$ and to play the correct $R\subseteq A \times \domain_{(a)}$ we identify two cases:
\begin{enumerate}
    \item Let $\delta(a) = \sigma \in \Sigma_0$. Then since $(a, \rootnode) \in S \implies \rootnode \in \domain_{(a)}$ and so $\gamma_a(\rootnode) = \sigma$. Therefore $\exists$ can play $R = \varnothing$ that satisfies $\delta(a) \mathrel{ \bar R} \gamma(\rootnode)$ and $\forall$ gets stuck and loses.
    \item Let $\delta(a) = (\sigma, b, b')$. Then $\gamma_a(\rootnode) = (\sigma, 0, 1)$ (recall $\rootnode$ is the empty string, so $\rootnode.i=i$ for $i\in\{
        0,1\}$). Now $\exists$ can play $R = \{ (b, 0), (b', 1) \}$ which satisfies $\delta(a) \mathrel {\bar{R}} \gamma_a(\rootnode)$ and $\forall$ can choose either $(b, 0)$ or $(b', 1)$. However, by the definition of $S$, both of these pairs belong to $S$ and so we are back to the same condition as the begining of the round. Hence $\exists$ can repeat the same argument.
\end{enumerate}
So either $\forall$ looses in some round, or the play is infinite, and therefore won by $\exists$.\qedhere
\end{proof}

Let $\bintree$ be a labelled tree such that $(a_0 , \rootnode , \varepsilon_0) \in \ekvwin$. For all of the triples $(a,w,\varepsilon) \in \ekvwin$, Verifier has a non empty set of winning moves to play which consists of a choice of successor functions in $\Delta(a)$ and a distance function. For each $(a,w,\varepsilon) \in \ekvwin$, let $\delta_{(a,w,\varepsilon)} (a) \in \Delta (a)$ and  $d_{(a,w,\varepsilon)} \colon A \times \domain \to [0,1]$ denote the successor and distance functions played by Verifier to win at the position $(a,w,\varepsilon)$. Then we let $M=\bigcup_{j=0}^{\infty} M_j$ and define $M_n$ inductively as follows: $M_0=\{(a_0,\rootnode,\varepsilon_0)\}$ and
\begin{multline*}
M_{n+1} =\Big\{ \Big(b_i, w.i, d_{(a,w,\varepsilon)}(b_i, w.i) \Big) \mid i \in \{0, 1\}  \land  (a, w, \varepsilon) \in M_n \ \land\\ \delta_{(a,w,\varepsilon)} (a) = (\sigma, b_0, b_1) \land \gamma(w) = (\sigma, w.0, w.1) \Big\}.
\end{multline*}
The next proposition states that projecting the second component from the elements in $M$ results in the domain of a run tree, denoted $\bintree_{(a_0,\rootnode,\varepsilon)}$, whenever $(a_0,\rootnode,\varepsilon)\in\ekvwin$.

\begin{proposition}\label{prop:domainRunTree}
The set $\pi_2 (M)$ (i.e. projecting the second component from the triples in $M$) is both prefix closed and sibling closed.
\end{proposition}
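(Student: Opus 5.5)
The plan is to prove by induction on $n$ that every word appearing as a second component in $M_n$ has length exactly $n$, and that for each $n$ the set $P_n := \pi_2(M_n)$ consists of words of length $n$ whose immediate prefixes lie in $P_{n-1}$. Both closure properties then follow from the shape of the inductive definition of $M_{n+1}$. The base case is immediate, since $M_0=\{(a_0,\rootnode,\varepsilon_0)\}$ gives $P_0=\{\rootnode\}$ and $\rootnode$ is the empty word. For the inductive step, every element of $M_{n+1}$ has the form $(b_i,w.i,\cdot)$ with $(a,w,\varepsilon)\in M_n$, so $|w.i|=|w|+1=n+1$ and its parent $w$ lies in $P_n$.

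\emph{Prefix closure.} Let $u\in\pi_2(M)$, say $u\in P_n$. If $u\neq\rootnode$, the observation above writes $u=w.i$ with $w\in P_{n-1}\subseteq \pi_2(M)$. Iterating this (formally, by induction on $|u|$) shows that every prefix of $u$ lies in $\pi_2(M)$. Along the way we also record that $\pi_2(M)\subseteq\domain$. This holds because $M_{n+1}$ only produces $w.i$ when $\gamma(w)=(\sigma,w.0,w.1)$, so the children are genuine nodes of $\domain$.

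\emph{Sibling closure.} Suppose $w.i\in\pi_2(M)$, witnessed by $(a,w,\varepsilon)\in M_n$ with $\delta_{(a,w,\varepsilon)}(a)=(\sigma,b_0,b_1)$ and $\gamma(w)=(\sigma,w.0,w.1)$. The definition of $M_{n+1}$ ranges over both $i\in\{0,1\}$ for the same witness triple, so $(b_{1-i},w.(1-i),d_{(a,w,\varepsilon)}(b_{1-i},w.(1-i)))\in M_{n+1}$ as well. Conversely, if $w.0,w.1\notin\pi_2(M)$ there is nothing to check. Hence $w.0\in\pi_2(M)\iff w.1\in\pi_2(M)$.

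The main subtlety, and the step I expect to need care, is well-definedness. The choice functions $\delta_{(\cdot)},d_{(\cdot)}$ are only defined on triples in $\ekvwin$, so one must check inductively that $M_n\subseteq\ekvwin$. This holds because $d_{(a,w,\varepsilon)}$ is a winning move, so every Falsifier response $(b_i,w.i,d_{(a,w,\varepsilon)}(b_i,w.i))$ is again winning. A further point is that a word $w$ could a priori occur in several triples of $M_n$, with different states or budgets. This does not affect the closure properties, which only concern the projection. It does, however, suggest noting, as in \cref{lem:existence of accepted tree}, that each $w$ in fact occurs in a unique triple of $M_{|w|}$, because each triple spawns exactly one triple per child. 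I would include this uniqueness by the same induction, since it is needed later to define a labelling on the run tree.
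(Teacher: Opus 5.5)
Your proof is correct. The paper in fact states \cref{prop:domainRunTree} without proof, relying on the same kind of ``easy induction'' it invokes for the sets $S_j$ in \cref{lem:existence of accepted tree}. Your level-by-level induction is exactly that omitted routine argument: words in $\pi_2(M_n)$ have length $n$, their parents lie in $\pi_2(M_{n-1})$, and both children $w.0,w.1$ arise from the same witness triple.

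Your extra checks are also correct. One is that $M_n\subseteq\ekvwin$, so that $\delta_{(a,w,\varepsilon)}$ and $d_{(a,w,\varepsilon)}$ are actually defined. The other is that each node of $\pi_2(M)$ carries a unique triple. Both make explicit points the paper glosses over but tacitly relies on, for instance when it later attaches ``the'' state $a$ to a node $v$ with $(a,v,1)\in M$.
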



\begin{proof}[Proof of \cref{thm:main}($1\implies 2$)]
Assume $\bintree$ is $\varepsilon_0$-accepted by $\mathcal{A}$, so $(a_0, \rootnode, \varepsilon_0) \in \ekvwin$. Define the following sets using  the domain $M$ of a run tree constructed above:
\begin{itemize}
    \item $\domain_1 = \{ w \mid \exists a \in A, \varepsilon\in[0,1)\quad  (a, w, \varepsilon) \in M \}$;
    \item $\domain_2 = \Big\{ v \mid \exists a \in A\ \Big( (a, v, 1) \in M \land \forall u\ (u \leq v \implies \nexists b\ (b, u, 1) \in M) \Big) \Big\}$;
     \item $\domain_3 = \{ v.t \mid v \in \domain_2  \text{ with } (a, v, 1) \in M \land t \in \domain_{(a)} \}$.
\end{itemize}
Notice that $\domain_2 \subset \domain_3$ as $t \in  \domain_{(a)}  $ can be the empty string.

Now we will construct a tree $\bintree'$ with the desired property by first defining its domain as $\domain' = \domain_1 \cup \domain_3$. It is prefix-closed and sibling-closed by tree substitution (\Cref{prop:tree substitution}), since both $M$ and all $\bintree_{(a)}$ are binary trees. Moreover, define the successor function $\gamma' \colon \domain' \to F(\domain')$:
\[
\gamma' (w) = \begin{cases}
\gamma (w) & \text{if}\ w\in \domain_1\\
\gamma_a(t) & \text{if}\ w\in \domain_3 \land w=v.t \ \text{(for some $t$)}
\end{cases},
\]
where $\gamma_a$ is the successor function as defined in the proof of \Cref{lem:existence of accepted tree}.
Next we claim the following two properties hold from which the final result follows directly since the starting position $(a_0, \rootnode, \varepsilon_0)$ has the said property and therefore $(\rootnode , \rootnode , \varepsilon_0) \in \bdwin$ and $(a_0 , \rootnode) \in \kvwin$.
\begin{claim}
\begin{enumerate}
\item $\forall w \in \domain_1\cup \domain_2\quad (a, w, \varepsilon) \in \ekvwin \implies (w, w, \varepsilon) \in \bdwin$.
\item $\forall w \in \domain_1\cup \domain_2\quad (a, w, \varepsilon) \in M \implies (a, w) \in \kvwin$.\qedhere
\end{enumerate}
\end{claim}
\begin{claimproof}
For Item 1, without loss of generality, let $w\in \domain_1$ (otherwise the result follows from \Cref{prop:basic bdwins}(\Cref{prop:basic bdwins:1}). Then $(a,w,\varepsilon) \in M$ (for some $a$ and $\varepsilon<1$). Therefore $\gamma'(w) = \gamma(w) = \sigma$. If $\sigma \in \Sigma_0$, $\exists$ can choose $d= 1$ and the position is winning due to \cref{prop:basic bdwins}(\cref{prop:basic bdwins:1}). If $\sigma \in \Sigma_2$, then $\gamma'(w) = (\sigma, 0, 1) = \gamma(w)$ and $\exists$ can choose $d: \domain \times \domain' \to [0, 1]$ such that $d(w.0, w.0) = d_{(a, w, \varepsilon)}(w.0,w.0)$ and $d(w.1, w.1) = d_{(a, w, \varepsilon)}(w.1,w.1)$ and $d=1$ otherwise. Then the choice of $\forall$ is either $(v, v', 1)$ that is winning for $\exists$ by \Cref{prop:basic bdwins}(\Cref{prop:basic bdwins:1}), or it is $(w.i, w.i, d_{(a, w, \varepsilon)}(w.i,w.i))$ by \Cref{rem:min distance}. But this position has the property that it is in $\ekvwin$ and $w.i \in \domain_1\cup \domain_2$ by definition. So we are back at the same condition and $\exists$ can repeat the same strategy. So the play is infinite and therefore winning by $\exists$.

For Item 2, if $w \in \domain_2$, then by definition $\bintree'_w$ is exactly the tree that is accepted by $\mathcal{A}$ starting from $w$. So let $w \in \domain_1$. Then $(a, w, \varepsilon) \in M$ for $\varepsilon < 1$ and  $\gamma'(w) = \gamma(w) = \sigma$. If $\sigma \in \Sigma_0$, $\exists$ can choose $R = \varnothing$ and wins. If $\sigma \in \Sigma_2$, then $\gamma'(w) = \gamma(w) = (\sigma, w.0, w.1)$ and $\exists$ can choose $\delta_{a, w, \varepsilon}(a)$.
 Also as $\varepsilon < 1$  we have the property $\bar {d}_{a, w, \varepsilon} \bigl(\delta_{a, w, \varepsilon}(a) , \gamma'(w) \bigr)\leq \varepsilon < 1$ so $\delta_{a, w, \varepsilon}(a)$ has to be of the form $(\sigma , b_0 , b_1)$. Now for the relation $\exists$ can choose $R = \{ (b_0, w.0), (b_1, w.1) \}$ and the next move $(b_i, w.i)$ of $\forall$ is such that $w.i \in \domain_1 \cup \domain_2$ as $(b_i, w.i, d_{(a,w,\varepsilon)}(b_i, w.i)) \in M$. So back to the same condition and $\exists$ can repeat the same strategy. So the play is infinite and therefore winning by $\exists$.
\end{claimproof}
\end{proof}

\section {Relating Defect and Measure}
\label {sec:optimal epsilon}

In this section, we apply our main result to two case studies: measuring termination and failed executions of programs, modelled as trees. To this end, we define a measure on subtrees and relate the quantity of interest to a distance $\varepsilon$ accepted in the $\epsilon$-acceptance game. The corresponding characterisations are stated in \Cref{prop:Measuring termination} and \Cref{prop:Measuring failed executions}.

Let $\bintree$ be a binary tree. For every subtree $\subtree{w}$, the \emph{measure} of $\subtree{w}$ is defined as
$$\mu (\subtree{w})=\ 2^{-\lvert w \rvert}$$
The existence and unicity of $\mu$ are guaranteed by Carathéodory's extension theorem~\cite{bauer2001measure}. Then $(\bintree,\mu)$ is a measure space. Note that $\mu$ is also a probability measure because $\mu(\bintree) = 1$. We  also  introduce a notation reminiscent of conditional probabilities. Consider two binary strings $w,v \in \domain$, the \emph{relative measure} $\mu(v \mid w)= 2^{\lvert w \rvert -\lvert v \rvert}$ if $w$ is a prefix of $v$; $0$ otherwise.

\subparagraph{Measuring termination} Let $\Sigma_0 = \{\star\} , \Sigma_2 =\{\sigma\}$. Let $\mathbb{A}$ be a tree automaton with a single state $A=\{a_0\}$ such that $\arr(a_0)=2$, and transition function $\Delta(a_0)=\{(\sigma,a_0,a_0)\}$. Moreover, let $Acc = A^\omega$. Consider the distance lifting induced by $\fun \colon [0,1]\times[0,1]\to[0,1]$ given by $\fun(x,y)=\tfrac{1}{2}x+\tfrac{1}{2}y$. Then the following proposition holds.
\begin{proposition}
\label{prop:Measuring termination}
For a  tree $\bintree$ and  the set $L=\{l\in \domain \mid \arr(v)=0\}$ of leaves we have
\[
(a_0,\rootnode,\varepsilon)\in \ekvwin
\;\Longleftrightarrow\;
\varepsilon \ge \sum_{l\in L} \mu(l\mid \rootnode).
\]
\end{proposition}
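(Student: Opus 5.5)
The plan is to set $m(w) = \sum_{l\in L,\, w\le l} \mu(l\mid w)$ for every $w\in\domain$, the relative measure of the leaves below $w$, and to show that $(a_0,w,\varepsilon)\in\ekvwin$ holds iff $\varepsilon\ge m(w)$. The statement is then the case $w=\rootnode$. Two local facts drive everything. First, $m(w)=1$ when $w$ is a leaf. Second, $m(w)=\tfrac12 m(w.0)+\tfrac12 m(w.1)$ when $\arr(w)=2$, since $\mu(l\mid w)=\tfrac12\mu(l\mid w.i)$ for $w.i\le l$. Note also that $m(w)\le 1$, because the leaves below $w$ are pairwise incomparable and hence form an antichain of cylinders.

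Since $\Delta(a_0)=\{(\sigma,a_0,a_0)\}$, Verifier's first move is forced. At a leaf $w$ we have $\bar d((\sigma,a_0,a_0),\star)=1$, so the position is winning iff $\varepsilon\ge 1=m(w)$; the ``if'' part is \cref{lem:1 is accepted}. At an inner node, Verifier must pick $d$ with $\tfrac12 d(a_0,w.0)+\tfrac12 d(a_0,w.1)\le\varepsilon$, and by \cref{rem:min distance} we may assume Falsifier answers with $(a_0,w.i,d(a_0,w.i))$.

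For ($\Leftarrow$) I would use the positional strategy ``always play $d(a_0,v)=m(v)$'', keeping the invariant $m(v)\le\varepsilon$ at the current position $(a_0,v,\varepsilon)$.
\begin{itemize}
\item At an inner node, the move is valid by the recursion: $\tfrac12 m(w.0)+\tfrac12 m(w.1)=m(w)\le\varepsilon$.
\item At a leaf, the invariant gives $\varepsilon\ge 1$, so \cref{lem:1 is accepted} applies.
\item Falsifier's answers $(a_0,v',\varepsilon')$ satisfy $\varepsilon'\ge m(v')$, so the invariant is preserved.
\end{itemize}
Hence Verifier never gets stuck, and every play is won by $\exists$.

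For ($\Rightarrow$) I would prove by induction on $n$ that $(a_0,w,\varepsilon)\in\ekvwin$ implies $\varepsilon\ge m_n(w)$. Here $m_n(w)$ is the sum of $\mu(l\mid w)$ over leaves $l\ge w$ with $|l|-|w|\le n$. The base case $n=0$ is the leaf analysis: $m_0(w)$ is $1$ if $w$ is a leaf and $0$ otherwise. For the step at an inner node $w$, take a winning move $d$. By \cref{rem:min distance} and winning, $(a_0,w.i,d(a_0,w.i))\in\ekvwin$, so the induction hypothesis gives $d(a_0,w.i)\ge m_{n}(w.i)$. Therefore
\[
\varepsilon\ \ge\ \tfrac12 d(a_0,w.0)+\tfrac12 d(a_0,w.1)\ \ge\ \tfrac12 m_n(w.0)+\tfrac12 m_n(w.1)\ =\ m_{n+1}(w).
\]
Letting $n\to\infty$ and using monotone convergence of the series gives $\varepsilon\ge m(w)$.

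The main obstacle is this direction. Infinite plays are winning for $\exists$, so a coinductive argument would not exclude Verifier ``postponing'' the leaf mass forever. The finite-depth induction is what rules that out: leaf mass is only ever finitely deep, and it is forced to be paid at the leaves.
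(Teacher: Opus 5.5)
Your argument is correct. The ($\Leftarrow$) direction is essentially the paper's. The paper plays the children's leaf mass there and $1$ elsewhere, while your uniform choice $d(a_0,\cdot)=m$ spares the separate treatment of answers $(a_0,v,1)$.

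For ($\Rightarrow$) you take a genuinely different route. The paper never argues inside the $\epsilon$-acceptance game. It invokes the main theorem (\cref{thm:main}; the text cites \cref{thm:win iff bdistance} by mistake) to get an accepted $\bintree'$ with $\bd(\bintree',\bintree)\le\varepsilon$. It then shows that the full binary $\sigma$-tree is the only tree accepted by $\mathbb{A}$. Finally it compares the post-fixpoint inequalities, $\bd(w,w)=1$ at leaves and $\bd(w,w)\ge\tfrac12(\bd(w.0,w.0)+\bd(w.1,w.1))$ at inner nodes, with the recursion for $\sum_{l\in L}\mu(l\mid w)$, and concludes with a bare ``thus''.

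That final step is exactly where your obstacle resurfaces. The averaging recursion has many solutions; on the leafless full tree every constant solves it. So one needs that every nonnegative $g$ with $g=1$ on leaves and $g(w)\ge\tfrac12(g(w.0)+g(w.1))$ on inner nodes dominates the leaf measure. That is proved by precisely your truncations $m_n$ and a limit.

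The paper's route buys an illustration of the main theorem: the defect is the bisimulation distance to the unique accepted tree. The price is dependence on that theorem, on identifying the accepted language, and on an unstated induction. Your route is self-contained, uses only the definition of winning positions, and makes the finite-depth argument explicit. It would also transfer to variants where the accepted language is hard to describe.
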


\begin{proof}
\fbox{$\Leftarrow$} Assume $\varepsilon \ge \sum_{l \in L} \mu(l \mid \rootnode)$; starting from the position $(a_0, \rootnode, \varepsilon)$, we show that $\exists$ has a winning strategy. The choice of $\delta(a_0)$ is unique, and for the distance function $\exists$ can choose $d$ as follows. If $\ell(w)=\star$, let $d = 1$, and if $\ell(w)=\sigma$, define
\[
d(a_0,\rootnode.i)=\sum_{l \in L}\mu(l \mid \rootnode.i)\ (i\in\{0,1\}),
\qquad d(a_0,v)=1 \text{ for } v \notin \{\rootnode.0,\rootnode.1\}.
\]
In the first case, $\bar{d}(\delta(a_0),\gamma(\rootnode))=1$ because $\arr(\rootnode)\neq \arr(a_0)$. However, since $w$ is a leaf, $\sum_{l \in L}\mu(l \mid \rootnode)=1$, hence $\varepsilon \ge 1$, and $d$ is valid. Any response $(a_0,v,1)$ of $\forall$ belongs to $\ekvwin$ by \Cref{lem:1 is accepted}.

\noindent
In the second case, when $\ell(\rootnode)= \ell(a_0)=\sigma \in \Sigma_2$, we simplify $\bar{d}(\delta(a_0),\gamma(\rootnode))$ as follows:
\[
\fun \bigl(d(a_0,\rootnode.0),d(a_0 , \rootnode.1)\bigr) =
 \tfrac{1}{2}\bigl(d(a_0,\rootnode.0)+d(a_0,\rootnode.1)\bigr)
= \sum_{l \in L}\mu(l \mid \rootnode)
\le \varepsilon.
\]
The last equality follows from the definition of $\mu$. Hence $d$ is valid. Any response of $\forall$ is then either $(a_0,v,1)$ which belongs to $\ekvwin$, or $(a_0,w.i,\varepsilon_i)$ with $\varepsilon_i \geq d(a_0,.i)=\sum_{l \in L}\mu(l \mid \rootnode.i)$ for $i\in\{0,1\}$. However, in the second case, the invariant is preserved, and we return to the same condition as before (with another $w$ instead of $\rootnode$), so $\exists$ can repeat this strategy. Therefore, the play is infinite and belongs to $Acc$, and hence $(a_0,\rootnode,\varepsilon)\in \ekvwin$.

\fbox{$\Rightarrow$} To prove this direction, we need the following claim.
\begin{claim}
The only tree accepted by $\mathbb{A}$ is the full binary tree $\bintree_A$ in which every node is labelled by $\sigma$. I.e., $\bintree_A = (\domain_A,\ell)$ where $\domain_A = \{0,1\}^*$ and $\gamma_A(w) = (\sigma, w.0, w.1)$ for  $w\in\domain_A$.
\end{claim}
\begin{claimproof}
The accepted tree exists since $\mathbb{A}$ satisfies the conditions of \Cref{lem:existence of accepted tree}. Let $\bintree_A$ be such a tree. We show that every node of $\bintree_A$ must have label $\sigma$ and exactly two children.

Assume $(a_0,w)\in \kvwin$. Then $\exists$ has a move $R$ such that $\delta(a_0)\,\bar{R}\,\gamma_A(w)$, which implies $\gamma_A(w) = (\sigma, w.0, w.1)$. Moreover, any possible response of $\forall$, namely $(a_0,w.0)$ or $(a_0,w.1)$, again belongs to $\kvwin$, so the argument can be repeated.

Since $(a_0,\rootnode)\in \kvwin$, starting from the root we inductively obtain that every node is labelled by $\sigma$ and has two successors. Hence $\bintree_A$ is the full binary tree.
\end{claimproof}

Assume $(a_0,\rootnode,\varepsilon)\in \ekvwin$. By \Cref{thm:win iff bdistance}, there exists a tree $\bintree'$ accepted by $\mathbb{A}$ such that $\bd(\bintree',\bintree)\le \varepsilon$. By the above claim, $\bintree'$ is the full binary tree $\bintree_A$. Since $\gamma_A(v) = (\sigma , v.0 , v.1)$ for all $v$ and $\bd$ is a bisimulation distance, we can write (below $w\in\domain$):
\[
\bd(w,w)\geq
\overline{\bd}(\gamma_A(w) , \gamma(w)) =
\begin{cases}
1 & \text{if } \gamma(w)=\star,\\
\frac{1}{2}\bigl(\bd(w.0,w.0)+\bd(w.1,w.1)\bigr) & \text{otherwise.}
\end{cases}
\]
Hence, $\bd(w,w)=1 \text{ if } \gamma(w)=\star$ and $\bd(w,w)\geq \tfrac{1}{2}\bigl(\bd(w.0,w.0)+\bd(w.1,w.1)\bigr) \text{ otherwise.}$

\noindent
On the other hand, by definition of the measure $\mu$, for all $w \in \domain$ we have
\[
\sum_{l\in L}\mu(l\mid w)=
\begin{cases}
1 & \text{if } \gamma(w)=\star,\\
\frac{1}{2}\Bigl(\sum_{l\in L}\mu(l\mid w.0)+\sum_{l\in L}\mu(l\mid w.1)\Bigr) & \text{otherwise.}
\end{cases}
\]
Thus, $\sum_{l\in L}\mu(l\mid w) \leq \bd(w,w)$ (for $w\in \domain$); so
$
\sum_{l\in L}\mu(l\mid \rootnode)\leq \bd(\rootnode,\rootnode)\leq \varepsilon.
$
\end{proof}

\subparagraph{Measuring failed executions}
We now illustrate the approach on a very simple yet crucial property, checked by a non-deterministic automaton: the absence of error labels in a binary tree with leaves. The quantitative aspect game will \emph{tolerate} error labels, but up to a certain measure that can not be greater than  $\varepsilon$. So let $\Sigma_0 = \{\star\} , \Sigma_2 =\{\ok,\error\}$.

First, we define $\subseterr$ as the set of subtrees of $\bintree$ whose root is labeled with $\error$, and $\subseterrprime$ as the subset of those trees whose ancestors are all labeled with $\ok$. Since any element of $\subseterr \setminus \subseterrprime$ must be a subtree of an element of $\subseterrprime$, they have the same measure. Similarly, we define the conditional sets $\subseterr \mid w$ and $\subseterrprime \mid w$ where the measure originates from the
node $w$ and we have $\mu(\subseterr \mid w) = \frac{\mu(\subseterr)}{2^{\lvert w\rvert}}$ (and similarly for $\subseterrprime \mid w$), effectively normalizing the measure relative to the depth of the prefix $w$.

Let $\mathbb{A}$ be a tree automaton with a single state $A=\{a_0\}$ and the transition function $\Delta(a_0)=\{ \star , (\ok,a_0,a_0)\}$. Moreover, let $Acc = A^\omega$. Consider the same distance lifting induced by $\fun \colon [0,1]\times[0,1]\to[0,1]$ given by $\fun(x,y)=\tfrac{1}{2}x+\tfrac{1}{2}y$. Then

\begin{proposition}
\label{prop:Measuring failed executions}
$
(a_0,\rootnode,\varepsilon) \in \ekvwin\Longleftrightarrow \varepsilon \geq \mu\left(\subseterrprime \right)
$
\end{proposition}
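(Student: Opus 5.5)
I would mirror the proof of \Cref{prop:Measuring termination}, using the local recursion satisfied by the measure of first error nodes. For $w\in\domain$ write $e(w)=\mu(\subseterrprime\mid w)$, read as the relative measure of nodes $v\geq w$ labelled $\error$ all of whose ancestors strictly between $w$ and $v$ (and $w$ itself, if $v\neq w$) are labelled $\ok$. This gives
\[
e(w)=\begin{cases}0 & \text{if } \ell(w)=\star,\\ 1 & \text{if } \ell(w)=\error,\\ \tfrac12\bigl(e(w.0)+e(w.1)\bigr) & \text{if } \ell(w)=\ok.\end{cases}
\]
The identity $e(\rootnode)=\mu(\subseterrprime)$ comes from the normalisation $\mu(\cdot\mid w)=\mu(\cdot)/2^{|w|}$, or equivalently from countable additivity over the disjoint cylinders generated by the first error nodes. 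I will establish this recursion first.

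\fbox{$\Leftarrow$} Assume $\varepsilon\ge e(\rootnode)$. I give $\exists$ the invariant ``at position $(a_0,w,\kappa)$ we have $\kappa\geq e(w)$'', and play according to the label of $w$.
\begin{itemize}
\item If $\ell(w)=\star$, $\exists$ picks $\delta(a_0)=\star$ and $d=1$. The lift is $0\le\kappa$, and every answer of $\forall$ has the form $(a_0,v,1)$, which is winning by \Cref{lem:1 is accepted}.
\item If $\ell(w)=\error$, then $e(w)=1$, so $\kappa=1$ and \Cref{lem:1 is accepted} applies directly; the lift is $1$ since the labels differ.
\item If $\ell(w)=\ok$, $\exists$ picks $(\ok,a_0,a_0)$ and sets $d(a_0,w.i)=e(w.i)$, with $d=1$ elsewhere. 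Then $\bar d=\tfrac12(e(w.0)+e(w.1))=e(w)\le\kappa$. Falsifier's answers either have value $1$ or have the form $(a_0,w.i,\kappa_i)$ with $\kappa_i\ge e(w.i)$, so the invariant is preserved.
\end{itemize}
Infinite plays are won by $\exists$, which completes this direction.

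\fbox{$\Rightarrow$} Assume $(a_0,\rootnode,\varepsilon)\in\ekvwin$. By \Cref{thm:main} (the automaton is deadlock free) there is an accepted $\bintree'$ with $\bd(\bintree',\bintree)\le\varepsilon$. Unlike the termination example, $\bintree'$ is not unique: accepted trees are exactly those with all inner nodes labelled $\ok$ and leaves $\star$. I would avoid working with $\bintree'$ as a concrete tree. Instead I prove directly that for every $w'\in\domain'$ and $w\in\domain$ with $(a_0,w')\in\kvwin$ we have $\bd(w',w)\ge e(w)$.

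\begin{itemize}
\item One route is the Knaster--Tarski characterisation. I would show that $d(w',w)=\max(\bd(w',w),e(w))$ (defined only on accepted $w'$, and $\bd$ elsewhere) is again a bisimulation distance, hence below $\bd$.
\item The cleaner route is a least-fixpoint argument, since $e$ is the least solution of its recursion. From $\bd\ge\overline{\bd}$ with $\ell'(w')\in\{\star,\ok\}$ I would check the three cases. If $\ell(w)=\error$, the labels differ, so $\bd(w',w)=1=e(w)$. If $\ell(w)=\star$, then $e(w)=0$ and there is nothing to prove. If $\ell(w)=\ok$, either the labels differ and $\bd=1$, or $\bd(w',w)\ge\tfrac12(\bd(w'.0,w.0)+\bd(w'.1,w.1))$ with $w'.i$ again accepted.
\end{itemize}

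The main obstacle is passing from these local inequalities to the global bound, because $e$ is defined as a measure rather than by a guarded recursion. I would handle it by truncation. Let $e_n(w)$ be the relative measure of first error nodes at depth at most $n$ below $w$. By induction on $n$ I show $\bd(w',w)\ge e_n(w)$ for all pairs with $w'$ accepted, using the three cases above; the base case is $e_0(w)=[\ell(w)=\error]$. Then $e(w)=\sup_n e_n(w)$ by monotone convergence (continuity of $\mu$ from below on the increasing union). Applying this at the root gives $\mu(\subseterrprime)=e(\rootnode)\le\bd(\rootnode,\rootnode)\le\varepsilon$. The same truncation would also repair the analogous step in \Cref{prop:Measuring termination}.
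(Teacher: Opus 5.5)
Your proof is correct, and for $(\Rightarrow)$ it takes a different and more robust route than the paper. The $(\Leftarrow)$ direction is the paper's argument essentially verbatim: the same case split on $\ell(w)$, the same move $d(a_0,w.i)=\mu(\subseterrprime\mid w.i)$, and the same invariant.

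For $(\Rightarrow)$ both you and the paper invoke \cref{thm:main}. The paper then uses the particular $\bintree'$ built inside that theorem's proof, which equals $\bintree$ until the first $\error$ node and is then a leaf $\star$. It writes the one-step inequality for $\bd(v,v)$ on the diagonal next to the one-step equation for $\mu(\subseterrprime\mid v)$ and concludes with ``it follows that''. You instead take an arbitrary accepted $\bintree'$ and arbitrary pairs $(w',w)$ with $w'$ accepted. You then pass from the local inequalities to the global bound by induction on the truncations $e_n$, followed by continuity of $\mu$ from below.

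This buys two things. First, you use only the statement of \cref{thm:main}, not its internals. The shape the paper ascribes to $\bintree'$ actually depends on which winning moves were fixed. With enough budget, a winning move at an $\ok$ node may assign value $1$ to an error-free child, whose subtree is then replaced by $\bintree_{(a_0)}$. Second, your truncation supplies what the paper's ``it follows that'' leaves implicit. On an infinite all-$\ok$ subtree every constant satisfies the recursion for $e$. So the local inequality $\bd\ge\tfrac12(\ldots)$ yields $e\le\bd$ only once one knows that $e$ is the \emph{least} solution, and that is exactly your Kleene-style iteration. The paper's version is shorter and more concrete; yours is self-contained and, as you note, would also patch the corresponding step in \cref{prop:Measuring termination}.

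Drop your first bullet, though: showing that $\max(\bd,e)$ is a bisimulation distance proves nothing. $\bd$ is the greatest post-fixpoint for the order $\geq$, i.e.\ the numerically \emph{smallest} bisimulation distance, so you would only learn $\max(\bd,e)\ge\bd$. Exhibiting a bisimulation distance bounds $\bd$ from above, never from below. The lower bound genuinely needs the induction in your second bullet.
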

\begin{proof}
\fbox{$\Leftarrow$} Assume $\varepsilon \geq \mu\left(\subseterrprime \right) = \mu\left(\subseterrprime \mid \rootnode \right) $. We now define a winning strategy for $\exists$ starting from $(a_0,\rootnode,\varepsilon)$. We proceed by case distinction:
\begin{enumerate}
\item If $\gamma(\rootnode) = \star$, then $\exists$ can play $\delta(a_0) = \star$ and $d=1$, which is a valid move since $\bar{d}(\delta(a_0), \gamma(\rootnode)) = 0$. And the next move of $\forall$ will be $(a_0, v, 1)\in\ekvwin$ by \Cref{lem:1 is accepted}.

    \item If $\gamma(\rootnode) = (\error, \rootnode.0, \rootnode.1)$, then $\exists$ can choose any $\delta(a_0) \in \Delta(a_0)$ and $d = 1$. In this case, $\subtree{\rootnode}$ is itself a subtree starting with an $\error$ root, so that $\mu\left(\subseterrprime \right) = 1$, and therefore $\varepsilon \geq 1$ (implying $\varepsilon = 1$). Thus $d$ is a valid choice as $\bar{d}(\delta(a_0), \gamma(\rootnode)) \leq \varepsilon$. Then any choice of $\forall$ will be of the form $(a_0, v, 1)$, that again belongs to $\ekvwin$ by \Cref{lem:1 is accepted}.
    
    \item If $\gamma(\rootnode) = (\ok, \rootnode.0, \rootnode.1)$, then $\exists$ can choose $\delta(a_0) = (ok, a_0, a_0)$ and $d$ as follows:
    \[
d(a_0, \rootnode.i) =\mu\left(\subseterrprime \mid \rootnode.i \right) \ ( i \in \{0, 1\})
    \qquad d(a_0, v) = 1 \quad \text{for } v \notin \{\rootnode.0, \rootnode.1\}
\]
    It is a valid choice because of the following derivation.
 \[
 \bar{d}(\delta(a_0), \gamma(\rootnode)) =    \frac{1}{2}\left(\mu\left(\subseterrprime \mid \rootnode.0 \right) +\mu\left(\subseterrprime \mid \rootnode.1 \right)\right)
       =\mu\left(\subseterrprime \right) \leq \varepsilon
\]

Notice that the last equality holds because the label of $\rootnode$ itself is $\ok$.  So, when measuring the set $\subseterrprime \mid \rootnode$, every $\error$-rooted subtree in that set must start from $\rootnode.0$ or $\rootnode.1$, and cannot be from $\rootnode$ itself. The next move of $\forall$ must then either be $(a_0,v,1)$ (which again belongs to $\ekvwin$), or $(a_0,\rootnode.i,\varepsilon_i)$ with $\varepsilon_i \geq d(a_0,\rootnode.i)=\mu\left(\subseterrprime \mid \rootnode.i \right)$ for $i\in\{0,1\}$. In the latter case, the invariant is preserved (with $w$ instead of $\rootnode$), and we return to the same condition as the beginning of the round, so that $\exists$ can repeat this strategy. This induces an infinite play, thus belonging to $Acc$. 
\end{enumerate}

\fbox{$\Rightarrow$} Assume that $(a_0, \rootnode, \varepsilon) \in \ekvwin$. Then, by \Cref{thm:main}, there is a tree $\bintree'$ that is accepted by $\mathcal{A}$ with $\bd(\bintree', \bintree) \leq \varepsilon$. The construction of $\bintree'$ in the proof of \cref{thm:main} has the following form:
starting from the root, $\bintree'$ coincides with $\bintree$ until an error node is met. Once we reach an error node, we replace it with a subtree accepted by $\mathcal{A}$. 
That is because the run tree following a winning strategy of $\exists$ will be equal to $\bintree$ on a path that is free of $\error$, and when we see an error node $v$, for any choice of $d$ and $\delta$ of $\exists$, $\bar{d}(\delta(a_0), \gamma(v)) = 1$, so $(a_0 , v , 1)$ will appear for the first time in the run tree and that is when we substitute $\bintree_{(a_0)}$. Moreover, $\bintree_{(a_0)}$ can be a single leaf $\star$, as it is always accepted by $\mathcal{A}$ ($\exists$ can choose $\delta (a_0) = \star$ and $d= 1$). So $\bintree'$ would be equal to $\bintree$ until it reaches an error node, at which point it becomes a leaf $\star$.

As $\bd$ is a bisimulation distance, for each $v \in \domain'$:
\[
\bd(v,v) \geq \bar{\bd}(\gamma'(v), \gamma(v)) = 
\begin{cases} 
0 & \text{if } \gamma'(v) = \gamma(v) = \star \\
\frac{1}{2}(\bd(v.0, v.0) +\bd(v.1, v.1)) &  \text{if } \gamma'(v) =\gamma(v) = (\ok, v.0 , v.1) \\
 1 & \text{if } \gamma'(v) = \star, \gamma(v) = (\error, v.0 , v.1)
\end{cases}
\]
Note that these are the only cases since by construction of $\bintree'$, it does not have any node labelled $\error$, and all of its labels $\ok$ match with the same node in $\bintree'$. 

Observe that if a tree's root is labelled with $\ok$ then it does not belong to the set of its subtrees whose root is labelled with $\error$.
From the definition of $\mu$, one then obtains that:
\[
\mu(\subseterrprime \mid v) = 
\begin{cases} 
0 & \text{if }  \gamma(v) = \star (=\gamma'(v))\\
\frac{1}{2}(\mu\left(\subseterrprime \mid v.0 \right) + \mu\left(\subseterrprime \mid v.1 \right)) &  \text{if } \gamma(v) = (\ok, v.0 , v.1) (=\gamma'(v))\\
 1 & \text{if } \gamma(v) = (\error, v.0 , v.1)\, (\text{then }  \gamma'(v) = \star)
\end{cases}
\]

It follows that $\mu\left(\subseterrprime \mid v \right) \leq \bd(v,v)$ for all $v\in \domain'$. In particular, since $\rootnode \in \domain'$,
$
\mu\left(\subseterrprime  \right)\leq \bd(\rootnode,\rootnode)\leq \varepsilon,
$
which completes the proof.
\end{proof}

\section{Conclusion}

In this paper, we extended the acceptance game \cite{kupke-venema:coalg-aut-theory} between a tree automaton and a labelled binary tree to a quantitative setting that lead to a more `robust' notion of acceptance (instead of a Boolean one). Intuitively, our main result (cf.\thinspace \Cref{thm:main}) shows how the winning strategy $\ekvwin$ of Verifier in an $\epsilon$-acceptance game can be seen as the winning strategy of Verifier playing the traditional acceptance and bisimulation distance game simultaneously.  Furthermore, the two examples from \Cref{sec:optimal epsilon} suggest a strong connection between the optimal $\varepsilon$ value for the game and Lebesgue measures over trees.
In both examples, this optimal $\varepsilon$ value corresponds to the measure of the set of subtrees that are not valid from the point of view of the execution of the automaton.

\subparagraph{Future work}  Our work is built on the acceptance game \cite{kupke-venema:coalg-aut-theory} between an $F$-automaton and an $F$-coalgebra. As the first step, it is natural to extend the proof of main theorem to the level of coalgebras. This will allow one to recover various instances of $\epsilon$-acceptance games over variety of structures like arbitrary $n$-ary trees, probabilistic trees, etcetera. More concretely, our work suggests that distance lifting (by an appropriate choice of $\fun$) results in a measure on binary trees. In the future, this connection needs a detailed treatment; the exact relationship between our distance lifting and measures on binary tree is left as future work.

\printbibliography
\end{document}